\setlist{nolistsep}
\newif\ifcode
\crefname{section}{\S}{\S\S}
\newcommand{\RNum}[1]{\uppercase\expandafter{\romannumeral #1\relax}}
\newcounter{linenumber}
\newcommand{\true}{\mathit{true}}
\newcommand{\false}{\mathit{false}}
\newcommand{\remove}[1]{}
\newcommand{\ignore}[1]{}
\definecolor{dkgreen}{rgb}{0,0.6,0}
\definecolor{gray}{rgb}{0.5,0.5,0.5}
\definecolor{mauve}{rgb}{0.58,0,0.82}
\tiny\color{gray},
\definecolor{gpcolor}{rgb}{0.6,0.2,0.3}
\newcommand{\mynote}[3]{
    \fbox{\bfseries\sffamily\scriptsize#1}
    {\small$\blacktriangleright$\textsf{\emph{\color{#3}{#2}}}$\blacktriangleleft$}}
\newcommand{\mynote}[3]{}}
\newcommand{\va}[1]{\mynote{Vitaly}{#1}{magenta}}
\newcommand{\myparagraph}[1]{ \vspace{1mm}\noindent\textbf{{#1}.}}
\begin{document}

\hyphenation{con-cur-ren-cy}
\hyphenation{con-cur-ren-cy-op-ti-mal}

\bibliographystyle{splncs}

\title{A Concurrency-Optimal Binary Search Tree}
\titlerunning{A Concurrency-Optimal Binary Search Tree}
\toctitle{A Concurrency-Optimal Binary Search Tree}
\author{Vitaly Aksenov\inst{1} \and Vincent Gramoli\inst{2} \and Petr Kuznetsov\inst{3} \and Anna Malova\inst{4} \and Srivatsan Ravi\inst{5}}
\authorrunning{V. Aksenov \and V. Gramoli \and P. Kuznetsov \and A. Malova \and S. Ravi}

\institute{
  INRIA Paris / ITMO University
\and 
  University of Sydney
\and 
  LTCI, T\'el\'ecom ParisTech Universit\'e Paris-Saclay
\and
  Washington University in St Louis
\and
  Purdue University
\date{}}


\maketitle

\begin{abstract}
The paper presents the first \emph{concurrency-optimal} implementation of a
binary search tree (BST). The implementation, based on a standard sequential implementation of
a partially-external tree, ensures that every \emph{schedule}, i.e.,
interleaving of steps of the sequential code, is accepted unless linearizability
is violated. To ensure this property, we use a novel read-write locking protocol
that protects tree \emph{edges} in addition to its nodes.  

Our implementation performs comparably to the state-of-the-art BSTs 
and even outperforms them on few workloads, which suggests that 
optimizing the set of accepted schedules of the sequential code can be an
adequate design principle for efficient concurrent data structures. 

\keywords
Concurrency optimality; Binary search tree, Linearizability
\vspace{-3em}
\end{abstract}


\section{Introduction}
\label{sec:intro}
To meet modern computational demands and to
overcome the fundamental limitations of computing hardware, 
the traditional single-CPU architecture is being replaced by a concurrent system
based on multi-cores or even many-cores.
Therefore, at least until the next technological revolution, the only
way to respond to the growing computing demand is to invest in smarter
concurrent algorithms. 

Synchronization, one of the principal challenges in concurrent programming,
consists in arbitrating concurrent accesses to shared \emph{data structures}:
lists, hash tables, trees, etc.
Intuitively, an efficient data structure must be \emph{highly
  concurrent}:
it should allow  multiple processes to ``make progress'' on it in
parallel.
Indeed, every new implementation of a concurrent data structure is
usually claimed to enable such a parallelism.  
But what does ``making progress'' means precisely?

\myparagraph{Optimal concurrency}
If we zoom in the code of an operation on a typical concurrent data structure, 
we can distinguish \emph{data accesses}, i.e., reads and updates to the
data structure itself, performed as though the operation works on the
data in the absence of concurrency.
To ensure that concurrent operations do not violate 
correctness of the implemented high-level data type (e.g.,
\emph{linearizability}~\cite{Her91} of the implemented \textsf{set}
abstraction),
data accesses are ``protected'' with
\emph{synchronization primitives}, e.g., acquisitions and releases of
locks or atomic read-modify-write instructions like compare-and-swap. 
Intuitively, a process makes progress by performing ``sequential''
data accesses to the shared data, e.g., traversing the data structure
and modifying its content.
In contrast, synchronization tasks, though necessary for correctness,
do not contribute to the progress of an operation.  

Hence, ``making progress in parallel'' can be seen as allowing concurrent
execution of pieces of locally sequential fragments of code.
The more synchronization we use to protect ``critical'' pieces of the
sequential code, the less \emph{schedules}, i.e., interleavings of
data accesses, we accept.  
Intuitively, we would like to use exactly as little synchronization as
sufficient for ensuring linearizability of the high-level implemented abstraction.
This expectation brings up the notion of a \emph{concurrency-optimal}
implementation~\cite{GKR16} that only rejects a schedule if it does
violate linearizability.

To be able to reason about the ``amount of concurrency'' exhibited by
implementations employing different synchronization techniques, we
consider the recently introduced notion of ``local serializability'' (on top of linearizability)
and the metric of the ``amount of concurrency'' defined via sets of
accepted (locally sequential) schedules~\cite{krg16}.
Local serializability, intuitively,  requires the sequence of
sequential steps locally observed by every given process to be
consistent with \emph{some} execution of the sequential algorithm.
Note that these sequential executions can be different for different
processes, i.e., the execution may not be
\emph{serializable}~\cite{Pap79-serial}.
Combined with the standard correctness criterion
of \emph{linearizability}~\cite{HW90,AW04}), local serializability
implies our basic correctness criterion called \emph{LS-linearizability}.
The concurrency properties of LS-linearizable data structures can be
compared on the same level:  implementation $A$ is ``more concurrent'' than
implementation $B$ if the set of schedules accepted by $A$ is a strict
superset of the set of schedules accepted by $B$.  
Thus, a \emph{concurrency-optimal}  implementation accepts \emph{all}
correct (LS-linearizable) schedules.

\myparagraph{A concurrency-optimal binary search tree}
It is interesting to consider binary search trees (BSTs) from the
optimal concurrency perspective, 
as they are believed, as a representative of \emph{search} data structures~\cite{CH98}, to be "concurrency-friendly"~\cite{Sut08}: 
updates concerning different keys are likely to operate on disjoint
sets of tree nodes (in contrast with, e.g., operations on \textsf{queues} or \textsf{stacks}).

We present a novel LS-linearizable concurrent BST-based \textsf{set} implementation. 
We prove that the implementation is optimally concurrent with respect to a
standard internal sequential tree~\cite{krg16}.
The proposed implementation employs the optimistic ``lazy'' locking
approach~\cite{HHL+05} that 
distinguishes \emph{logical} and \emph{physical} deletion of a node and makes sure
that read-only operations are \emph{wait-free}~\cite{Her91}, i.e.,
cannot be delayed by concurrent processes.

The algorithm also offers a few  algorithmic novelties.
Unlike most implementations of concurrent trees, the
algorithm uses multiple locks per node: one lock for the \emph{state} of
the node, and one lock for each of its descendants.
To ensure that only conflicting operations can delay each other, we
use \emph{conditional} read-write locks, where the lock can be
acquired only under certain condition.
Intuitively, only changes in the relevant part of the tree structure may prevent a thread from
acquiring the lock.     
The fine-grained conditional read-write locking of nodes and edges
allows us to ensure that an implementation rejects a schedule only if
it violates linearizability.

\myparagraph{Concurrency optimality and performance}
Of course, optimal concurrency does not necessarily imply performance
nor maximum progress (\`a la \emph{wait-freedom}~\cite{HS11-progress}). 
An extreme example is the transactional memory (TM) data structure. 
TMs typically require restrictions of serializability as a correctness criterion. 
And it is known that rejecting a schedule that is rejected only if it is not serializable 
(the property known as \emph{permissiveness}), requires very heavy local computations~\cite{GHS08-permissiveness,KR11}.
But the intuition is that looking for concurrency-optimal search data structures like trees pays off. 
And this work answers this question in the affirmative by demonstrating empirically that the Java implementation
of our concurrency optimal BST outperforms state-of-the-art BST implementations (\cite{CGR13b,DVY14,BCCO10,EFRB10}) on most workloads.
Apart from the obvious benefit of producing a highly efficient BST, this work suggests that 
optimizing the set of accepted schedules of the sequential code can be an
adequate design principle for building efficient concurrent data structures.

\myparagraph{Roadmap}
The rest of the paper is organized as follows. \cref{sec:alg} describes the details of our BST implementation, starting from the sequential implementation
of \emph{partially-external} binary search tree, our novel conditional read-lock lock abstraction to our concurrency optimal BST implementation.
\cref{sec:opt} formalizes the notion of concurrency optimality and sketches the relevant proofs; complete proofs are delegated to the optional tech report.
\cref{sec:eval} provides details of our experimental methodology and extensive evaluation of our Java implementation. \cref{sec:related} articulates the 
differences with related BST implementations
and presents concluding remarks.

\section{Binary Search Tree Implementation}
\label{sec:alg}
This section consists of two parts. At first, we describe our sequential implementation of the \emph{set} using partially-external
binary search tree. Then, we
build the concurrent implementation
on top of the sequential one by adding synchronization separately for each field of a node.
Our implementation takes only the locks that are neccessary to perform correct modifications of the tree structure.
Moreover, if the field is not going to be modified, the algorithm takes the read lock instead of the write lock. 

We start with the specification of the set type which our binary search tree should
satisfy.
An object of the \emph{set} type stores a set of integer values, initially empty, 
and exports operations $\lit{insert}(v)$, $\lit{remove}(v)$, $\lit{contains}(v)$.
The update operations, $\lit{insert}(v)$ and $\lit{remove}(v)$, return
a boolean response, $\true$ if and only if $v$ is absent (for
$\lit{insert}(v)$) or present (for $\lit{remove}(v)$) in the \emph{set}.  
After $\lit{insert}(v)$ is complete, $v$ is present in the set, and 
after $\lit{remove}(v)$ is complete, $v$ is absent.
The $\lit{contains}(v)$ returns a boolean response, $\true$ if and only if $v$ is present.

A \emph{binary search tree}, later called BST, is a rooted ordered tree in which each node $v$ has a left child and a
right child, either or both of which can be null. The node is named a \emph{leaf}, if it does not have any child.
The order is carried by a value property: the value of each node is stricly greater than the values in its left subtree
and strictly smaller than the values in the right subtree.

\subsection{Sequential implementation}

As for a sequential implementation we chose the well-known \emph{partially-external} binary search tree.
Such tree combines the idea of the internal binary search tree, where the set is represented by
the values from all nodes, and the external binary search tree, where the set is represented by the values
in the leaves while the inner nodes are used for routing (note, that for the external tree the value property
does not consider leafs). The partially-external tree supports two types of nodes: routing and data.
The set is represented by the values contained by the data nodes. To bound the number of routing vertices
by the number of data nodes the tree should satisfy the condition:
all routing nodes have exactly two children.

The pseudocode of the sequential implementation is provided in the Algorithm~\ref{alg:seq}. Here, we give
a brief description. 
The \textbf{traversal} function takes a value $v$ and traverses down the tree from the root
following the corresponding links as long as the current node is not null or its value is not $v$. It returns
the last three visited nodes. The \textbf{contains} function takes a value $v$ and checks the last  node visited
by the traversal and returns whether it is null. The \textbf{insert} function takes a value $v$ and uses the traversal
function to find the place to insert the value. If the node is not null, the algorithm checks whether the node is data or
routing: in the former case it is impossible to insert; in the latter case, the algorithm simply changes the state from routing
to data. If the node is null, then the algorithm assumes that the value $v$ is not in the set and inserts a new node with the
value $v$ as the child of the latest non-null node visited by the traversal function call. The \textbf{delete} function takes a
value $v$ and uses the traversal function to find the node to delete. If the node is null or its state is routing,
the algorithm assumes that the value $v$ is not in the set and finishes. Otherwise, there are three cases depending on
the number of children that the found node has: (i)~if the node has two children, then the algorithm changes its state from data to routing; 
(ii)~if the node has one children, then the algorithm unlinks the node; 
(iii)~finally if the node is a leaf then the algorithm unlinks the node, in addition if the parent is a routing node then it also unlinks 
the parent.

\begin{algorithm*}[!ht]
\caption{Sequential implementation.}
\label{alg:seq}
  \begin{algorithmic}[1]
  	\begin{multicols}{2}
  	{\scriptsize
	
	\Part{Shared variables}{
		\State node is a record with fields: 
		\State $\ms{val}$, its value
		\State $\ms{left}$, its pointer to the left child
		\State $\ms{right}$, its pointer to the right child
		\State $\ms{state} \in \{DATA, ROUTING\}$, its state
		
		\State Initially the tree contains one node $\ms{root}$,
		\State $\ms{root.val} = +\infty$
		\State $\ms{root.state} = DATA$
	}\EndPart
	
	\Statex
	
	\Part{$\lit{traversal}(v)$}{
	\Comment{wait-free traversal}
		\State $\ms{gprev} \gets null; \ms{prev} \gets null$
		\State $\ms{curr} \gets \ms{root}$ \Comment{start from root}
                \While{$\ms{curr} \ne null$}
		  \If{$curr.val = v$}
		      \State break
		  \Else
		    \State $\ms{gprev} \gets \ms{prev}$ 
		    \State $\ms{prev} \gets \ms{curr}$
		    \If{$\ms{curr.val} < v$} 
				\State $\ms{curr} \gets \ms{curr.left}$
			\Else
				\State $\ms{curr} \gets \ms{curr.right}$
			\EndIf
		  \EndIf
		\EndWhile
		
		\Return $\langle \ms{gprev}, \ms{prev}, \ms{curr} \rangle$ \EndReturn 
   	}\EndPart

   	\Statex
	\Part{$\lit{contains}(v)$}{
		\Comment{wait-free contains}
		\State $\langle \ms{gprev}, \ms{prev}, \ms{curr} \rangle \gets \ms{traversal}(v)$
		\Return $\ms{curr} \ne null$~and~$\ms{curr}.state = DATA$ \EndReturn
   	}\EndPart	
	
	\Statex
	\Part{$\lit{insert}(v)$}{
		\State $\langle \ms{gprev}, \ms{prev}, \ms{curr} \rangle \gets traversal(v)$
		
		\If{$\ms{curr} \ne null$} \Comment{node has value $v$ or is a place to insert}
			\State go to Line 8
		\Else 
			\State go to Line 12
		\EndIf
		
		\Return true \EndReturn
   	}\EndPart
	
	\Statex 
	\Part{Update existing node}{
	  \If{$\ms{curr.state} = DATA$}
		\Return false \EndReturn \Comment{$v$ is already in the set}
	  \EndIf
			
	  \State $\ms{curr.state} \gets DATA$
	}\EndPart
	
	\Statex

	\Part{Insert new node}{
	  \State $\ms{newNode.val} \gets v$ \Comment{allocate a new node}

	  \If{$v < \ms{prev.val}$}
		\State $\ms{prev.left} \gets newNode$
	  \Else
		\State $\ms{prev.right} \gets newNode$
	  \EndIf
	}\EndPart
	
	\newpage
	
	\Part{$\lit{delete}(v)$}{
		\State $\langle \ms{gprev}, \ms{prev}, \ms{curr} \rangle \gets traversal(v)$
 		
		\If{$\ms{curr} = null$~or~$\ms{curr.state} \ne DATA$}
			\Return false \EndReturn \Comment{$v$ is not in the set}
		\EndIf

		\If{$\ms{curr}$ has exactly 2 children} 
		  \State go to Line 34
		\EndIf
		
		\If{$\ms{curr}$ has exactly 1 child}
		  \State go to Line 36
		\EndIf
		\If{$\ms{curr}$ is a leaf}
			\If{$\ms{prev.state} = DATA$} 
                          \State go to Line 46
			\Else 
			  \State go to Line 51
			\EndIf
		\EndIf
		
		\Return true \EndReturn
	}\EndPart
	
	\Statex
	\Part{Delete node with two children}{
	  \State $\ms{curr.state} \gets ROUTING$
	}\EndPart
	
	\Statex 
	\Part{Delete node with one child}{
	  \If{$\ms{curr.left} \ne null$}
		\State $\ms{child} \gets \ms{curr.left}$
	  \Else
		\State $\ms{child} \gets \ms{curr.right}$
	  \EndIf

	  \If{$\ms{curr.val} < \ms{prev.val}$}
		\State $\ms{prev.left} \gets \ms{child}$
	  \Else
		\State $\ms{prev.right} \gets \ms{child}$
	  \EndIf
	}\EndPart
	
	\Statex 
	\Part{Delete leaf with DATA parent}{
	  \If{$\ms{curr}$ is left child of $\ms{prev}$}
	    \State $\ms{prev.left} \gets null$
	  \Else
		\State $\ms{prev.right} \gets null$
	  \EndIf
	}\EndPart
	
	\Statex 
	\Part{Delete leaf with ROUTING parent}{
		\State \Comment{save second child of $\ms{prev}$ into $\ms{child}$}
		  \If{$\ms{curr}$ is left child of $\ms{prev}$} 
			\State $\ms{child} \gets \ms{prev.right}$
		  \Else
			\State $\ms{child} \gets \ms{prev.left}$
		  \EndIf

		  \If{$\ms{prev}$ is left child of $\ms{gprev}$}
			\State $\ms{gprev.left} \gets \ms{child}$
		  \Else                         
			\State $\ms{gprev.right} \gets \ms{child}$
		  \EndIf
	}\EndPart

	}
	\end{multicols}
  \end{algorithmic}
\end{algorithm*}

\subsection{Concurrent implementation}
As the basis of our concurrent implementation we took the idea of optimistic algorithms, where
the algorithm reads all necessary variables without synchronizations
and right before the modification, the algorithm
takes all the locks and checks the consistency of all the information it read. 
As we show in the next section, we build upon the partially-external property of the BST to 
provide a concurrency-optimal BST.
Let us first give more details on how the algorithm is implemented.

\myparagraph{Field reads} Since our algorithm is optimistic we do not want to read the same field twice.
To overcome this problem when the algorithm reads the field it stores it in ``cache'' and the further accesses
return the ``cached'' value.
For example, the reads of the $left$ field in Lines 28 and 29 of Algorithm~\ref{alg:concur} return the same (cached) value.

\myparagraph{Deleted mark} As usual in concurrent algorithms with wait-free traversals,
the deletion of the node happens in two stages. At first, the delete operation
logically removes a node from the tree by setting the boolean flag to \textbf{deleted}. Secondly,
the delete operation updates the links to physically remove the node. By that, any traversal that
suddenly reaches the ``under-deletion'' node, sees the deletion node and could restart the operation.

\myparagraph{Locks} In the beginning of the section we noted that we have locks separately for each field of a node
and the algorithm takes only the necessary type of lock: read or write.
For that, we implemented read-write lock simply as one $lock$ variable. The smallest bit
of $lock$ indicates whether the write lock is taken or not, the rest part of the variable indicates the
number of readers that have taken a lock. In other words, $lock$ is zero if the lock is not taken,
$lock$ is one if the write lock is taken, otherwise, $lock$ divided by two represents the number of times
the read lock is taken. The locking and unlocking are done using the atomic compare-and-set primitive.
Along, with standard $\lit{tryWriteLock}$, $\lit{tryReadLock}$, $\lit{unlockWrite}$ and $\lit{unlockRead}$
we provide additional six functions on a node: $\lit{tryLockLeftEdge(Ref|Val)(exp)}$,
$\lit{lockRightEdge(Ref|Val)(exp)}$ and $\lit{try(Read|Write)LockState(exp)}$
(Starting from here, we use the notation of bar to not duplicate the similar names;
such notation should be read as either we choose the first option or the second option.)

Function $\lit{tryLock(Left|Right)EdgeRef}$ ensures that the lock is taken
only if the field ($left$ or $right$) guarded by that lock is equal to $exp$, i.e., the child node has
not changed, and the current node is not deleted, i.e., its deleted mark is not set.
Function $\lit{tryLock(Left|Right)EdgeVal}$ ensures that the lock is taken
only if the value of the node in the field ($left$ or $right$) guarded by that lock is equal to $exp$
, i.e., the node could have changed by the value inside does not,
and the current node is not deleted, i.e., its deleted mark is not set.
Function $\lit{try(Read|Write)LockState(exp)}$ ensures that the lock is taken only if the value of the $state$
is equal to $exp$ and the current node is not deleted, i.e., its deleted mark is not set.

These six functions are implemented in the same manner: the function reads necessary fields and lock variable, checks
the conditions, if successful it takes a corresponding lock, then checks the conditions again, if unsuccessful
it releases lock. In most cases in the pseudocode we used a substitution $tryLockEdge(Ref|Val)(node)$ instead of
$tryLock(Left|Right)Edge(Ref|Val)(exp)$. This substitution, given not-null value, decides whether the $node$
is the left or right child of the current node and calls the corresponding function providing $node$ or $node.value$.

\begin{algorithm*}[!h]
\caption{Concurrent implementation.}
  \label{alg:concur}
  \begin{algorithmic}[1]
  	\begin{multicols}{2}
  	{\scriptsize

  	\Part{$\lit{contains}(v)$}{
  	  \State $\langle \ms{gprev}, \ms{prev}, \ms{curr} \rangle \gets traversal(v)$
  	  \Return $\ms{curr \ne null} \wedge \ms{curr.state} = DATA$ \EndReturn
  	}\EndPart

	\Statex	             

	\Part{$\lit{insert}(v)$}{
		\State $\langle \ms{gprev}, \ms{prev}, \ms{curr} \rangle \gets traversal(v)$
		
		\If{$\ms{curr} \ne null$}
		  \State go to Line~\ref{line:ins:start}
		\Else
 		  \State go to Line~\ref{line:ins-null:start}
		\EndIf
		
		\State Release all locks
		\Return $\true$ \EndReturn

	\Statex
	
   	\Statex \textbf{Update existing node}
	  \If{$\ms{curr.state} = DATA$} \label{line:ins:start}
	    \Return $\false$ \EndReturn
	  \EndIf
	  
	  \State $\ms{curr.tryWriteLockState(ROUTING)}$ \label{line:ins:curr} 
	  \State $\ms{curr.state} \gets DATA$ \label{line:ins:lin}

	\Statex
   	
   	\Statex \textbf{Insert new node}
   	  \State $\ms{newNode.val} \gets v$ \label{line:ins-null:start} 
	   
	  \If{$v < \ms{prev.val}$}
            \State $prev.tryLockLeftEdgeRef(null)$ \label{line:ins-null:prev:null:L}
            \State $\ms{prev.slock.tryReadLock()}$ \label{line:ins-null:prev:L}
            \If{$\ms{prev.deleted}$} \label{line:ins-null:prev:deleted:L}
              \State Restart operation
            \EndIf
	    \State $\ms{prev.left} \gets newNode$ \label{line:ins-null:lin:L}
	  \Else
	    \State $prev.tryLockRightEdgeRef(null)$ \label{line:ins-null:prev:null:R}
            \State $\ms{prev.slock.tryReadLock()}$ \label{line:ins-null:prev:R}
            \If{$\ms{prev.deleted}$} \label{line:ins-null:prev:deleted:R}
              \State Restart operation
            \EndIf

            \State $\ms{prev.right} \gets newNode$ \label{line:ins-null:lin:R}
	  \EndIf
	}\EndPart

	\Statex

	\Part{$\lit{delete}(v)$}{
	  \State $\langle \ms{gprev}, \ms{prev}, \ms{curr} \rangle \gets traversal(v)$
      \Statex \Comment{All restarts are from this Line}

	  \If{$\ms{curr} = null \vee curr.state \ne DATA$}
	    \Return false \EndReturn
      \EndIf	 

	  \If{$\ms{curr}$ has exactly 2 children}
	    \State go to Line~\ref{line:del-2:curr:data}
	  \EndIf
	  
	  \If{$\ms{curr}$ has exactly 1 child}
	    \State go to Line~\ref{line:del-1:start}
	  \EndIf
	  
	  \If{$\ms{curr}$ is a leaf}
	    \If{$\ms{prev}.state = DATA$}
	      \State go to Line~\ref{line:del-0:data:start}
	    \Else
	      \State go to Line~\ref{line:del-0:routing:start}
	    \EndIf
	  \EndIf
	    
      \State Release all locks 
      \Return $\true$ \EndReturn	  
	
	\Statex
	
	\Statex \textbf{Delete node with two children}
	  \State $\ms{curr.tryWriteLockState(DATA)}$ \label{line:del-2:curr:data}

	  \If{$\ms{curr}$ does not have 2 children} \label{line:del-2:curr:2children}
	     \State Restart operation
	  \EndIf
	  \State $curr.state \gets ROUTING$ \label{line:del-2:lin}

	\newpage

	\Statex \textbf{Lock acquisition routine for vertex with one child}
          \State $curr.tryLockEdgeRef(child)$ \label{line:del-1:curr:child}
          \State $prev.tryLockEdgeRef(curr)$ \label{line:del-1:prev:curr}

	  \State $curr.tryWriteLockState(DATA)$ \label{line:del-1:curr:data}
	  \If{$\ms{curr}$ has 0 or 2 children} \label{line:del-1:curr:1child}
	    \State Restart operation
	  \EndIf
	                                                                            
	\Statex

	\Statex \textbf{Delete node with one child}
	  \If{$\ms{curr.left} \ne null$} \label{line:del-1:start} \label{line:del-1:cache-1}
		  \State $\ms{child} \gets \ms{curr.left}$ \label{line:del-1:cache-2}
	  \Else
		  \State $\ms{child} \gets \ms{curr.right}$
	  \EndIf

	  \If{$\ms{curr.val} < \ms{prev.val}$} 
	    \State perform lock acquisition at Line~\ref{line:del-1:curr:child}
	    \State $\ms{curr.deleted} \gets \true$ \label{line:del-1:ldel:L}
	    \State $\ms{prev.left} \gets \ms{child}$ \label{line:del-1:lin:L}
	  \Else
	    \State perform lock acquisition at Line~\ref{line:del-1:curr:child}
	    \State $\ms{curr.deleted} \gets \true$ \label{line:del-1:ldel:R}
	    \State $\ms{prev.right} \gets \ms{child}$ \label{line:del-1:lin:R}
	  \EndIf

    \Statex

	\Statex \textbf{Lock acquisition routine for leaf}
	  \State $prev.tryLockEdgeVal(curr)$ \label{line:del-0:prev:curr}
	  \If{$\ms{v} < \ms{prev.key}$} \Comment{get current child}
	    \State $\ms{curr} \gets prev.left$
	  \Else
	    \State $\ms{curr} \gets prev.right$
	  \EndIf
	  \State $\ms{curr.tryWriteLockState(DATA)}$ \label{line:del-0:curr}
	  \If{$\ms{curr}$ is not a leaf} \label{line:del-0:curr:leaf}
	    \State Restart operation
	  \EndIf

        \Statex

	\Statex \textbf{Delete leaf with DATA parent}
	  \If{$\ms{curr.val} < \ms{prev.val}$} \label{line:del-0:data:start}
	    \State perform lock acquisition at Line~\ref{line:del-0:prev:curr}
	    \State $\ms{prev.tryReadLockState(DATA)}$ \label{line:del-0:data:prev:L} 
	    \State $\ms{curr.deleted} \gets \true$ \label{line:del-0:data:ldel:L}
	    \State $\ms{prev.left} \gets null$ \label{line:del-0:data:lin:L}
	  \Else
            \State perform lock acquisition at Line~\ref{line:del-0:prev:curr}
            \State $\ms{prev.tryReadLockState(DATA)}$ \label{line:del-0:data:prev:R}
	    \State $\ms{curr.deleted} \gets \true$ \label{line:del-0:data:ldel:R}
	    \State $\ms{prev.right} \gets null$ \label{line:del-0:data:lin:R}
	  \EndIf
	
	\Statex 
	
	\Statex \textbf{Delete leaf with ROUTING parent}
	  \If{$\ms{curr.val} < \ms{prev.val}$} \label{line:del-0:routing:start}
	    \State $\ms{child} \gets \ms{prev.right}$
	  \Else
	    \State $\ms{child} \gets \ms{prev.left}$
	  \EndIf	    
	
	  \If{$\ms{prev}$ is left child of $\ms{gprev}$}
	    \State perform lock acquisition at Line~\ref{line:del-0:prev:curr}
            \State $prev.tryEdgeLockRef(child)$ \label{line:del-0:routing:prev:child:L}
            \State $gprev.tryEdgeLockRef(prev)$ \label{line:del-0:routing:gprev:prev:L}
	    \State $\ms{prev.tryWriteLockState(ROUTING)}$ \label{line:del-0:routing:prev:L}
	    \State $\ms{prev.deleted} \gets \true$ \label{line:del-0:routing:ldel:prev:L}
	    \State $\ms{curr.deleted} \gets \true$ \label{line:del-0:routing:ldel:curr:L}
	
	    \State $\ms{gprev.left} \gets \ms{child}$ \label{line:del-0:routing:lin:L}
	  \Else
	    \State perform lock acquisition at Line~\ref{line:del-0:prev:curr}
            \State $prev.tryEdgeLockRef(child)$ \label{line:del-0:routing:prev:child:R}
            \State $gprev.tryEdgeLockRef(prev)$ \label{line:del-0:routing:gprev:prev:R}
	    \State $\ms{prev.tryWriteLockState(ROUTING)}$ \label{line:del-0:routing:prev:R}
	    \State $\ms{prev.deleted} \gets \true$ \label{line:del-0:routing:ldel:prev:R}
	    \State $\ms{curr.deleted} \gets \true$ \label{line:del-0:routing:ldel:curr:R}

	    \State $\ms{gprev.right} \gets \ms{child}$ \label{line:del-0:routing:lin:R}
	  \EndIf

	}\EndPart

	}
	\end{multicols}
  \end{algorithmic}
\end{algorithm*}

\vspace{-0.7em}
\section{Concurrency optimality and correctness}
\label{sec:opt}

\vspace{-0.7em}
In this section, we show that our implementation is
\emph{concurrency-optimal}~\cite{GKR16}.
Intuitively, a concurrency-optimal implementation employs as much
synchronization as necessary for ensuring correctness of the
implemented high-level abstraction~--- in our case, the linearizable set
object~\cite{Her91}.

Recall our \emph{sequential} BST implementation and imagine that we run it in a \emph{concurrent} environment.
We refer to an execution of  this concurrent algorithm as a
\emph{schedule}.
A schedule thus consists of reads, writes, node creation events, and invocation and responses
of high-level operations.

Notice that in every such schedule, any operation witnesses a
\emph{consistent} tree state locally, i.e., it cannot distinguish the
execution from a sequential one.   
It is easy to see that the local views \emph{across operations} may
not be mutually consistent, and this simplistic concurrent algorithm is not linearizable.
For example, two insert operations that concurrently traverse the
tree may update the same node so that one of the operations
``overwrites'' the other (so called the ``lost update'' problem).
To guarantee linearizability, one needs to ensure that only correct
(linearizable) schedules are accepted.
In this section, we are going to show first that this is indeed the
case with our algorithm: all the schedules it \emph{accepts} are correct.
More precisely, a schedule $\sigma$ is accepted by an algorithm if it
has an execution in which the sequence of high-level invocations and
responses, reads, writes, and node creation events (modulo the
restarted fragments) is $\sigma$~\cite{krg16}.    

Further, we show that, in a strict sense, our algorithm accepts
\emph{all} correct schedules. 
In our definition of correctness, we demand that at all times
the algorithm maintains a \emph{BST} that does not
contain nodes that were previously \emph{physically deleted}. 
Formally, a set of nodes reachable
from the \textit{root} is a \emph{BST} if:        
(i) they form a tree rooted at node \textit{root};   
(ii) this tree satisfies the \emph{value property}: for each node with value $v$ all the values in the left subtree
are less than $v$ and all the values in the right subtree are bigger than $v$;
(iii) each routing node in this tree has two children.

Now we say that a schedule is \emph{observably correct} if
each of its prefixes $\sigma$ satisfies the following conditions:
(i) subsequence of high-level invocations and responses of operations that made a write in $\sigma$
has a linearization with respect to the \textsf{set} type.
(ii) the data structure after performing $\sigma$ is a BST $B$;
(iii) $B$ does not contain a node $x$ such that there exist 
$\sigma'$ and $\sigma''$, such that $\sigma'$ is a prefix of
  $\sigma''$, $\sigma''$ is a prefix of $\sigma$, $x$ is in the
  BST after $\sigma'$, and $x$ is not in the BST after $\sigma''$.
(iii)    has a linearization

\begin{theorem}[Correctness]\label{th:corr}
The schedule corresponding to any execution of our BST implementation is
observably correct.
\end{theorem}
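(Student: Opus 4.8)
The plan is to prove observable correctness by establishing the three required conditions as separate invariants maintained along every execution, proceeding by induction on the length of the schedule. The backbone of the argument is a careful accounting of which locks each modifying step holds and what the conditional-lock primitives guarantee at the moment the lock is granted. The key preliminary observation I would make precise first is a \emph{validation lemma}: whenever a write to a field $f$ of a node $x$ is executed by some operation, that operation holds, at the time of the write, the write-lock (or an appropriate read-lock) guarding $f$, and the conditional predicate checked at acquisition time (equality of the guarded field to the cached \emph{exp}, plus $\neg x.\ms{deleted}$) still holds — because, once acquired, the lock prevents any concurrent modification of that field and prevents setting the deleted mark, and the re-check after acquisition rules out the race window. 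This turns every accepted write into a step that is ``justified'' by a locally consistent view of the relevant neighborhood of the tree.

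With the validation lemma in hand I would prove condition (ii), that the reachable set of nodes is always a BST, by induction: the initial tree (single root) is trivially a BST; each write step is one of the handful of structural modifications in Algorithm~\ref{alg:concur} (state flip DATA$\leftrightarrow$ROUTING, link insertion of a fresh leaf, bypass of a one-child node, null-ing a leaf pointer, bypass of a routing parent of a deleted leaf). For each case I would use the held locks to argue that the portion of the tree the modification depends on is exactly as the operation observed it (parent still points to \ms{curr}, \ms{curr} still has the claimed number of children — which is precisely what the re-check lines \ref{line:del-2:curr:2children}, \ref{line:del-1:curr:1child}, \ref{line:del-0:curr:leaf} enforce — and the affected nodes are not deleted), so the modification preserves the tree shape, the value property, and the ``routing nodes have two children'' clause. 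Condition (iii) (no resurrection of a physically deleted node) follows from the discipline that a node is set \ms{deleted} $=\true$ \emph{before} (or together with, under lock) its incoming link is removed, and that all locking primitives refuse to act on a node whose deleted mark is set; hence once a node leaves the reachable tree it can never be re-linked, so no node oscillates in and out of the BST.

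For condition (i), linearizability of the subsequence of operations that performed a write (together with — reading the slightly garbled list — the read-only operations), I would exhibit explicit linearization points: for an update that writes, the point is its (unique) modifying write — line \ref{line:ins:lin} or \ref{line:ins-null:lin:L}/\ref{line:ins-null:lin:R} for \lit{insert}, line \ref{line:del-2:lin}, \ref{line:del-1:lin:L}/\ref{line:del-1:lin:R}, \ref{line:del-0:data:lin:L}/\ref{line:del-0:data:lin:R}, or \ref{line:del-0:routing:lin:L}/\ref{line:del-0:routing:lin:R} for \lit{delete}; for a \lit{contains} (and a non-writing update that returns $\false$), the point is the moment its traversal reads the decisive node. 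Using the validation lemma, at the chosen point the local view coincides with the reachable BST, so by the BST search-path property the operation's value is present iff its response says so, and the post-condition of updates holds; since these points lie within the operations' intervals this yields a valid linearization with respect to the \textsf{set} type. The main obstacle I anticipate is condition (i) for \emph{wait-free} \lit{contains}: its traversal is entirely unsynchronized and may momentarily descend into a subtree that is being detached, so I must argue that whenever the traversal reaches a node, either that node is still on the current search path for $v$ (so the eventual verdict is sound with respect to a linearization point placed at the read of the terminal node), or the traversal has read a stale pointer into a detached region but the terminal node it inspects still carries a \ms{state}/\ms{deleted} value that was correct at the instant of that read — this requires a short reachability-preservation argument about the shape of detached fragments (a detached node's outgoing links, and its descendants, are never repurposed, again by the deleted-mark discipline), and it is the part where the interaction between the lazy physical deletion and the lock-free traversal has to be pinned down most carefully.
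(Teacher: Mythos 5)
Your overall decomposition matches the paper's: structural invariants (BST shape, two-children routing nodes, no resurrection) proved by induction over writes using the held locks and the re-checked conditions, followed by an explicit linearization-point assignment. (The paper is in fact far terser than you are on the structural part, essentially deferring the first two properties to prior work on partially-external trees.) The substantive issue is in your linearization points for read-only and failed operations, which is exactly the place you flag as the main obstacle but do not actually resolve.

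You propose to linearize a \lit{contains} (and a non-writing failed update) at the read of the decisive node, and to fix each successful \lit{delete} at its physical unlink write. This fails for a slow traversal that reaches \ms{curr} \emph{after} \ms{curr} has been physically unlinked: the node still carries $\ms{state}=DATA$, so the \lit{contains} returns $\true$, but at the instant of that read the value is no longer in the set, so that instant cannot serve as its linearization point; your claim that the observed \ms{state}/\ms{deleted} value ``was correct at the instant of that read'' is precisely what is false here. The paper's assignment is adaptive on both sides: a \lit{contains} or failed update is linearized immediately after (or before) the linearization point of a concurrent successful update on the same key, or at its own invocation point, never at its read; and, more subtly, the linearization point of a successful \lit{delete} is not a fixed write but floats in the window between the logical-deletion write and the physical unlink, placed after the last invocation of any unsuccessful \lit{insert}$(v)$ or \lit{contains}$(v)$ that reads \ms{curr}, so that late readers of the logically-deleted node can still be ordered before the deletion takes effect. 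Without this (or an equivalent device), condition (i) does not go through. You would also need to state and prove something like the paper's Lemma~\ref{lem:traverse:exact} (a traversal finds a present node unless a concurrent successful \lit{delete} of that key is linearized during it) to justify that successful operations' verdicts are consistent with the chosen order; your validation lemma covers writers but not the unsynchronized search path.
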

Finally, we say that an implementation is \emph{concurrency-optimal} if it accepts all observably correct schedules.

\vspace{-0.2em}
\begin{theorem}[Optimality]\label{th:opt}
  Our BST implementation is concurrency-optimal.
\end{theorem}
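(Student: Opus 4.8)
The plan is to show that for every observably correct schedule $\sigma$, our algorithm has an execution whose induced schedule (modulo restarted fragments) is exactly $\sigma$. The argument proceeds by induction on the length of $\sigma$, maintaining the invariant that after the prefix processed so far, the data structure state of our implementation coincides with the BST $B$ guaranteed by observable correctness, and that the local (cached) views of all pending operations are consistent with their progress in $\sigma$. For the inductive step we take the next event $e$ of $\sigma$ --- a read, a write, a node-creation, or a high-level invocation/response --- and exhibit a (possibly empty) finite sequence of lock acquisitions/releases and restarts of \emph{other} pending operations after which $e$ can be performed by the owning operation without being rejected. Reads, node creations, and invocation/response events are immediate since traversals are wait-free and never blocked. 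The entire content of the proof is therefore in the write events.

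For a write event $e$ on field $f$ of node $x$ performed by operation $\mathit{op}$, I would argue that $e$ is enabled as follows. First, because $e$ occurs in the observably correct schedule $\sigma$, the conditions checked by the relevant conditional lock (the guarded field still equals the cached expected value, the node's $\mathit{deleted}$ mark is unset, the state equals the expected value, the appropriate child-count condition holds) all hold in the state reached just before $e$ in $\sigma$ --- this is exactly where we use that $\sigma$ keeps a valid BST that never resurrects physically deleted nodes, together with Theorem~\ref{th:corr}'s structural characterization of which writes a correct schedule may contain. Hence $\mathit{op}$ can successfully take the write lock (or read lock) that $e$ requires. Second, I must show the lock is currently \emph{available}: here I invoke the key design feature that a lock on field $f$ of $x$ is held only by an operation that is itself about to write a \emph{conflicting} field, and two conflicting writes cannot both be next-to-execute in a correct schedule without $\sigma$ ordering one strictly before the other; so I can first let that other holder either complete its write (if it is earlier in $\sigma$) or restart (its cached condition is about to be falsified, so restarting is consistent with $\sigma$, which only records non-restarted fragments). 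Iterating this over the finitely many lock holders yields a state where every lock $\mathit{op}$ needs is free and every guard holds, so $e$ is accepted.

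The main obstacle is the second part: proving that the \emph{set} of locks our algorithm requires for a given write is small enough that the above ``free up the holders'' step always succeeds --- i.e., that whenever $\sigma$ allows $e$ next, any operation currently holding one of $\mathit{op}$'s needed locks is an operation that $\sigma$ either orders before $e$'s owner on a genuinely conflicting update or that must restart. This amounts to a careful case analysis over the four shapes of update (change $\mathit{state}$ at an internal node; insert a leaf; unlink a degree-one node; unlink a leaf together with a routing parent), checking for each the precise set of node/edge locks taken (Lines~\ref{line:ins:curr}, \ref{line:ins-null:prev:null:L}--\ref{line:ins-null:prev:R}, \ref{line:del-2:curr:data}, \ref{line:del-1:curr:child}--\ref{line:del-1:curr:data}, \ref{line:del-0:prev:curr}--\ref{line:del-0:curr}, \ref{line:del-0:routing:prev:child:L}--\ref{line:del-0:routing:prev:R}) and arguing that every such lock guards a field whose modification is genuinely incompatible with $e$ under the \textsf{set} semantics. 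The routing-parent leaf deletion, which touches $\mathit{gprev}$, $\mathit{prev}$, and $\mathit{curr}$ and their connecting edges, is the delicate case, since it is the one where the locked region is largest; one must verify that no strictly smaller locked region would still suffice to preserve the BST structural invariants of Theorem~\ref{th:corr}. Once this case analysis is in place, the induction goes through and every observably correct $\sigma$ is accepted, establishing concurrency-optimality.
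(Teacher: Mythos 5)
Your top-level strategy coincides with the paper's: exhibit, for every observably correct schedule $\sigma$, a matching execution of the implementation, using the fact that every critical section contains exactly one event of the schedule, so that the only way a schedule can be rejected is that some guard condition fails and forces a restart. However, the proposal misplaces the technical burden, and the part you label the ``main obstacle'' is not where the difficulty lies. The paper simply considers executions in which critical sections are executed sequentially, in the order in which their (unique) events occur in $\sigma$; no two critical sections ever overlap, so lock availability never arises and your ``free up the holders'' machinery is unnecessary. That machinery is also itself suspect: forcing some other pending operation to restart retroactively removes its already-performed reads from the schedule (which is defined modulo restarted fragments), so you would end up realizing a different schedule from the one you set out to match.

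The genuine gap is that the step you dispatch in one sentence --- ``because $e$ occurs in an observably correct $\sigma$, the conditions checked by the relevant conditional lock all hold'' --- is the entire content of the paper's proof. One must go through every restart site (Line~\ref{line:ins:curr}, Lines~\ref{line:ins-null:prev:null:L} and \ref{line:ins-null:prev:deleted:L}, and all the delete variants) and verify, condition by condition, that if the guard fails and the operation nevertheless performs its write, the resulting schedule violates one of the three clauses of observable correctness. The paper organizes this around three reusable bad outcomes: a write that does not change the represented set, so that the counts of successful inserts and deletes of the value can no longer be linearized; a write that unlinks a still-reachable subtree containing a data node with another value; and a write that re-links a physically deleted node. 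Each of the dozen-odd guards is then matched, by an induction on the position of the modification in the execution, to one of these outcomes (or to a direct violation of the BST shape invariant, e.g.\ a routing node left with fewer than two children). Without this case analysis, the claim that the guards hold in an observably correct schedule is an assertion rather than a proof; the case analysis you do propose in your final paragraph is aimed at lock granularity and conflict sets, which is a different question and one that the serialized-critical-section construction renders moot.
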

\vspace{-0.2em}

The intuition behind the proof of Theorem~\ref{th:opt} is the following. 
We show that for each observably correct schedule there exists a
matching  execution of our implementation.
Therefore, only schedules not observably correct can be rejected by
our algorithm.
The construction of an execution that matches an observably correct schedule is possible, in particular, due to
the fact that every critical section in our algorithm  contains
exactly one event of the schedule.
Thus, the only reason to reject a schedule is that some condition on a
critical section does not hold and, as a result, the operation must be
restarted.
By accounting for all the conditions under which an operation
restarts, we show that this may only happen if, otherwise, the
schedule violates observable correctness.

\begin{figure*}[!t]
	\begin{center}
	\subfloat[Scenario depicting the concurrent execution of $\lit{insert}(1)$ and $\lit{insert}(3)$; rejected by popular BSTs like \cite{CGR13b,DVY14,BCCO10,EFRB10}, 
	but must be accepted by a concurrency-optimal BST\label{sfig:ins1}]{\scalebox{.58}[.58]{\includegraphics[]{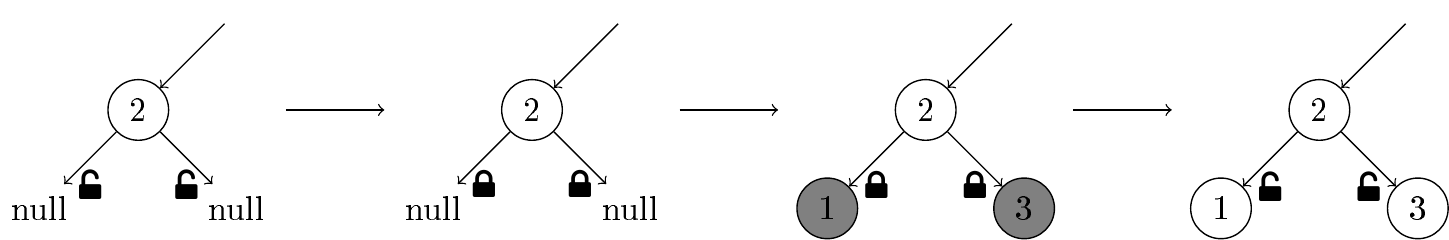}}}
	\end{center}
        \vspace{-2em}
        \begin{center}
	\subfloat[Scenario depicting the concurrent execution of $\lit{delete}(3)$ that is concurrent with a second $\lit{delete}(3)$ followed by a successful $\lit{insert}(3)$; 
	rejected by all the popular BSTs~\cite{CGR13b,DVY14,BCCO10,EFRB10, NM14}, but must be accepted by a concurrency-optimal BST\label{sfig:del1}]{\scalebox{.55}[.55]{\includegraphics[]{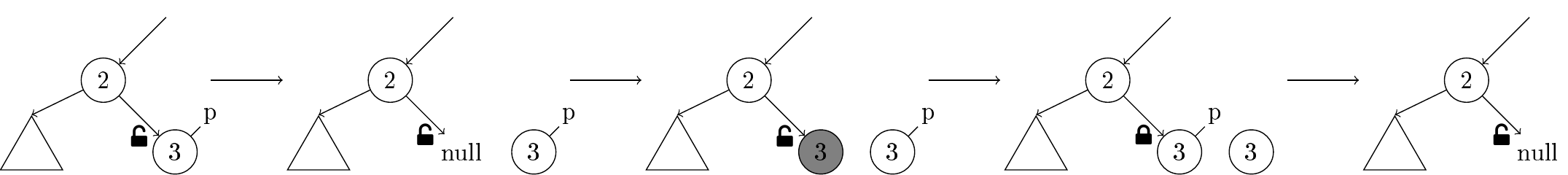}}}
	\end{center}
	\caption{Examples schedules rejected by concurrent BSTs not concurrency-optimal}
        \label{fig:sch}
        \vspace{-2em}
\end{figure*}
\myparagraph{Suboptimality of related BST algorithms}
To understand the hardness of building linearizable concurrency optimal BSTs, we explain how some typical correct schedules are rejected by current state-of-the-art BST
algorithms against which we evaluate the performance of our algorithm.
Consider the concurrency scenario depicted in Figure~\ref{sfig:ins1}. There are two concurrent operations $\lit{insert}(1)$ and
$\lit{insert}(3)$ performed on a tree.
They traverse to the corresponding links (part a)) and lock them concurrently (part b)). Then they insert new nodes (part c)).
Note that this is a correct schedule of events; however, most BSTs including the ones we compare our implementation against~\cite{CGR13b,DVY14,BCCO10,EFRB10} reject
this schedule or similar.
However, using multiple locks per node allows our concurrency-optimal implementation to accept this schedule.

The second schedule is shown in the Figure~\ref{sfig:del1}. There is one operation $p = \lit{delete}(3)$ performed on a tree
shown in part a). It traverses to a node $v$ with value $3$. Then, some concurrent operation $\lit{delete}(3)$ unlinks node $v$ (part b)).
Later, another concurrent operation inserts a new node with value $3$ (part c)).
Operation $p$ wakes up and locks a link since the value $3$ is the same (part d)).
Finally, $p$ unlinks the node with value $3$ (part e)). Note that this is a correct schedule since both the delete operations can be successful; however, all the 
BSTs we are aware of reject this schedule or similar~\cite{CGR13b,DVY14,BCCO10,EFRB10, NM14}. While, there is an execution of our concurrency-optimal BST that accepts this schedule.

\section{Implementation and evaluation}
\label{sec:eval}

\myparagraph{Experimental setup} For our experiments
we used two machines to evaluate the versioned binary search tree.
The first is a 4-processor Intel Xeon E7-4870 2.4 GHz server (Intel)
with 20 threads per processor (yielding 80 hardware threads in total), 512 Gb of RAM, running Fedora 25.
This machine has Java 1.8.0\_111-b14 and HotSpot VM 25.111-b14.
Second machine is a 4-processor AMD Opteron 6378 2.4 GHz server (AMD)
with 16 threads per processor (yielding 64 threads in total), 512 Gb of RAM, running Ubuntu 14.04.5.
This machine has Java 1.8.0\_111-b14 and HotSpot JVM 25.111-b14.

\myparagraph{Binary Search Tree Implementations} We compare our algorithm,
denoted as Concurrency Optimal or CO, against four other implementations of concurrent BST.
They are:
\begin{enumerate*}[label={\arabic*)}]
\item the lock-based contention-friendly tree by Crain et al. (\cite{CGR13b}, Concurrency Friendly or CF),
\item the lock-based logical ordering AVL-tree by Drachsler et al. (\cite{DVY14}, Logical Ordering or LO)
\item the lock-based tree by Bronson et al. (\cite{BCCO10}, BCCO) and
\item the lock-free tree by Ellen et al. (\cite{EFRB10}, EFRB).
\end{enumerate*}
All these implementations are written in Java and taken
from the {\texttt{synchrobench repository}}~\cite{Gra15}.
In order to make the comparison equitable, we remove rotation routines from
the CF-, LO- and CO- trees implementations. We are aware of efficient
lock-free tree by Natarajan and Mittal (\cite{NM14}), but unfortunately
we were unable to find it written on Java.

\myparagraph{Experimental methodology} For our experiments, we use the environment provided by the synchrobench
library. To compare the performance we considered the following parameters:
\begin{itemize}
\item \textbf{Workloads.} Each workload distribution is characterized by the percent $x\%$
of update operations. This means that the tree will be requested to make
$100 - x\%$ of \texttt{contains} calls, $x / 2 \%$ of \texttt{insert} calls and $x / 2 \%$ of \texttt{delete} calls.
We considered three different workload distributions: 0\%, 20\% and 100\%.
\item \textbf{Tree size.} On the workloads described above, the tree size depends on the size of the key space
(the size is approximately half of the range).
We consider three different key ranges: $2^{15}$, $2^{19}$ and $2^{21}$.
To ensure consistent results, rather than starting with an empty tree, we pre-populated the tree before
execution.
\item \textbf{Degree of contention.} This depends on the number of threads in a machine.
We take enough points to reason about the behaviour of curves.
\end{itemize}
In fact, we made experiments on a larger number of settings but we shortened our
presentation due to lack of space.
We chose the settings such that we had two extremes and one middle point. For workload, we chose 20\% of attempted updates as a middle point,
because it corresponds to real life situation in database management where the percentage of successful updates is 10\%. (In our testing environment
we expect only half of update calls to succeed)

\begin{figure*}[!t]
    \center
	\subfloat[Evaluation of BST implementations on Intel\label{sfig:results:intel}]{\scalebox{0.58}[0.58]{\includegraphics[width=1\linewidth, height=.75\textheight]{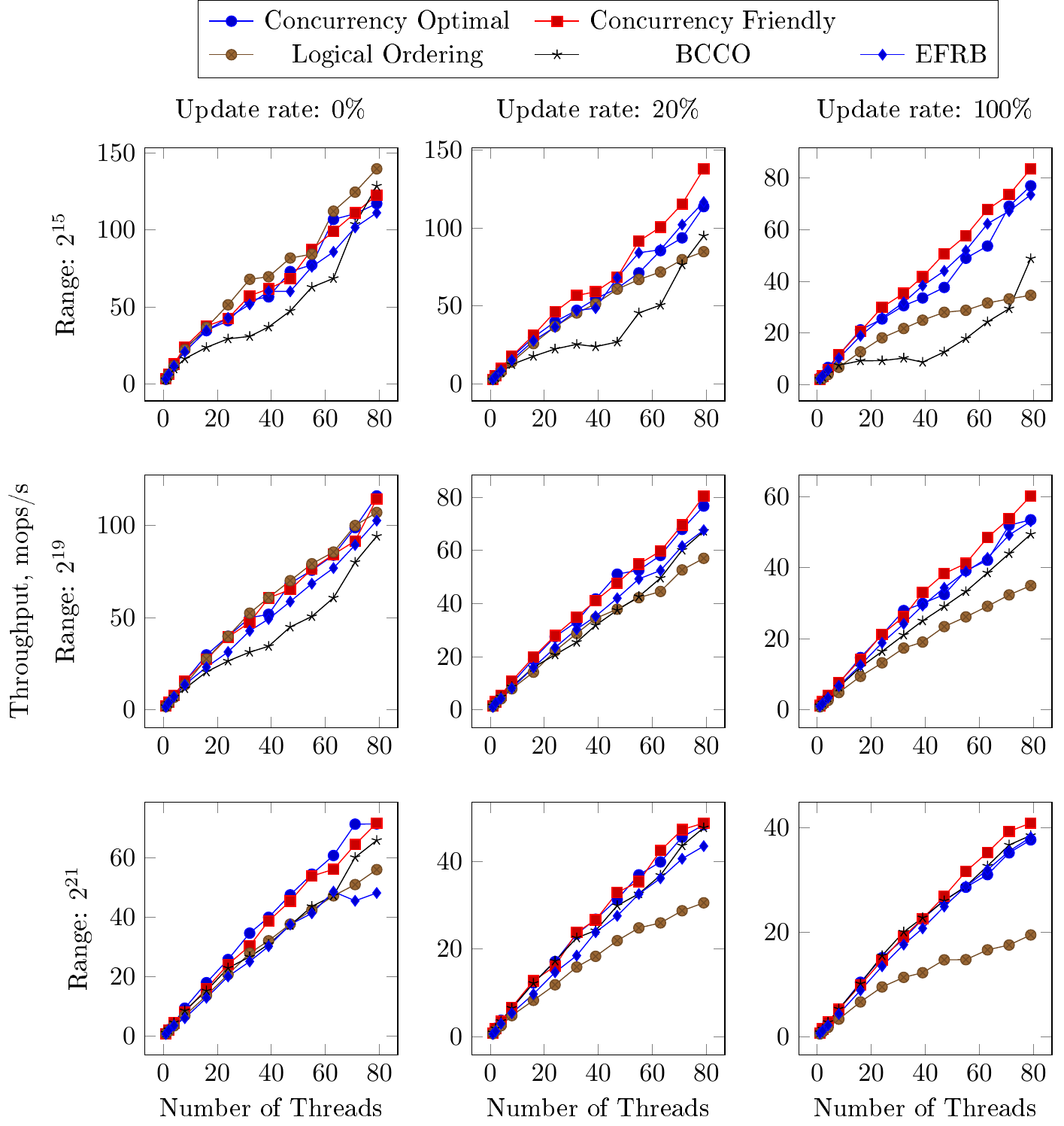}}}
        \\
        \vspace{.5mm}        
	\subfloat[Evaluation of BST implementations on AMD\label{sfig:results:amd}]{\scalebox{0.58}[0.58]{\includegraphics[width=1\linewidth, height=.75\textheight]{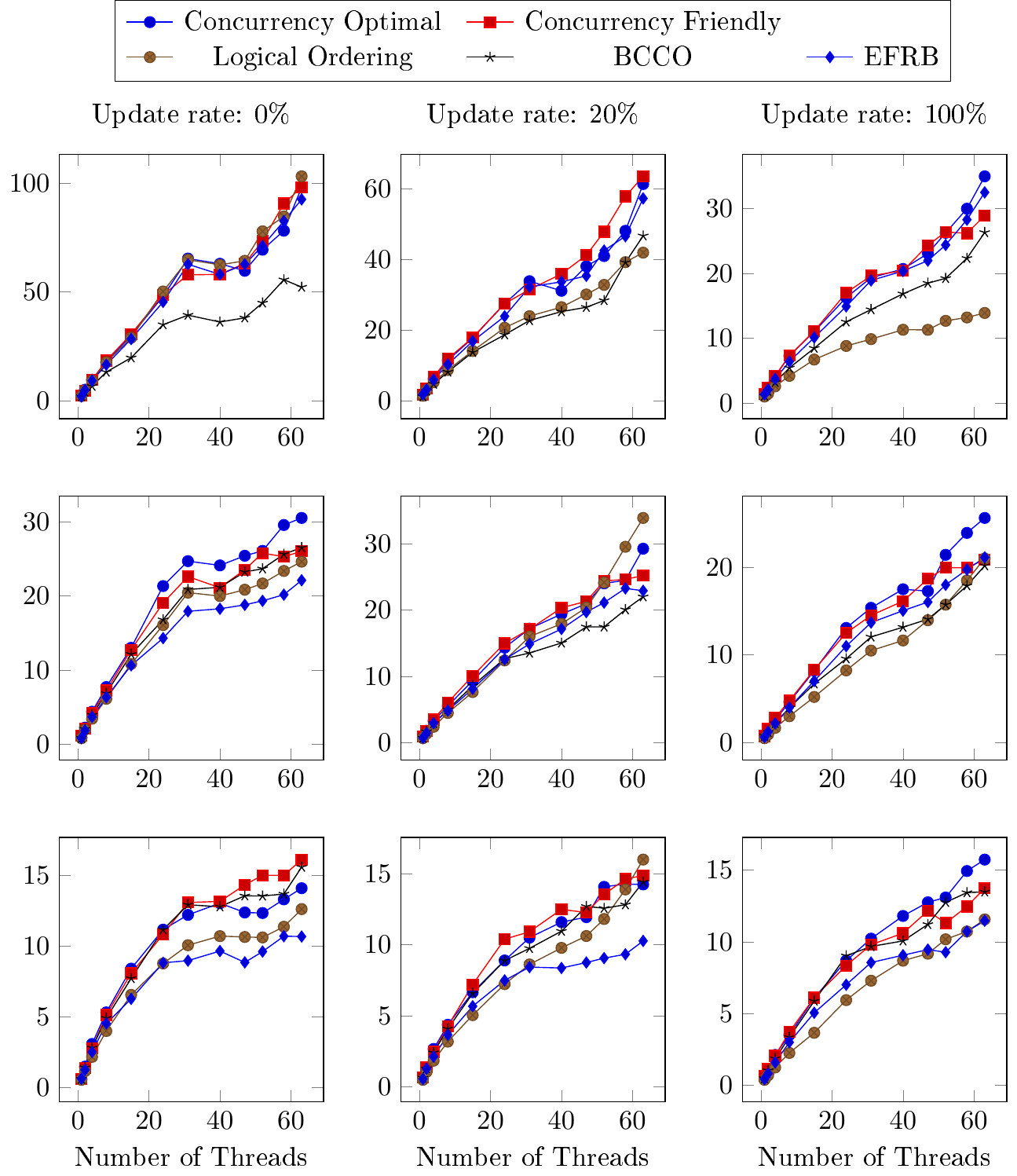}}}
	\caption{Performance evaluation of concurrent BSTs}
        \label{fig:eval}
\end{figure*}

\myparagraph{Results}
To get meaningful results we average through up to 25 runs. Each run
is carried out for 10 seconds with a warmup of 5 seconds.
Figure \ref{sfig:results:intel} (and resp. \ref{sfig:results:amd}) contains the
results of executions on Intel (and resp. AMD) machine.
It can be seen that with the increase of the size the performance of our algorithm becomes better relatively to CF-tree.
This is due to the fact that with bigger size the cleanup-thread in CF-tree implementation
spends more time to clean the tree out of logically deleted vertices, thus,
the traversals has more chances to pass over deleted vertices, leading to longer traversals.
By this fact and the trend shown, we could assume that CO-tree outperforms CF-tree on bigger sizes.
On the other hand, BCCO-tree was much worse on $2^{15}$ and became similar to CO-tree on $2^{21}$. This happened because the races
for the locks become more unlikely. This helped much to BCCO-tree, because
it uses high-grained locking. Since, our algorithm is ``exactly'' the same
without order of locking, On bigger sizes we could expect that our implementation will continue to perform
similarly to CO-tree, because the difference in CO-tree and CF-tree implementations is only in grabbing locks method.
By that, we could state that our algorithm works well not depending on the size.
As the percentage of contains operations increases, the difference between our algorithm and CF-tree becomes smaller, moreover,
our algorithm seems to perform better than other trees.

\section{Related work and Discussion}
\label{sec:related}
Although lots of efforts have been devoted to improve the scalability of binary search trees as the concurrency level increases, we are not aware of any work measuring their concurrency.
Measuring concurrency has already been expressed as the idea of comparing a concurrent data structure to its sequential counterpart~\cite{PODCKRV12} and this reasoning was generalized 
to a class of search structures organizing nodes in a directed acyclic graph~\cite{GKR16}. While a similar reasoning was applied to a linked list~\cite{optimistic-list15-disc} to measure its concurrency, 
the existence of a concurrency-optimal binary search tree has never been addressed to our knowledge.

The transactional red-black tree~\cite{CCKO08} uses software transactional memory 
without sentinel nodes to limit conflicts between concurrent transactions but restarts the update operation after its rotation aborts.
Optimistic synchronization as seen in transactional memory was used to implement 
a practical lock-based binary search tree~\cite{BCCO10}.
The speculation-friendly tree~\cite{CGR12} is a partially internal binary search tree that marks internal nodes as logical deleted to reduce conflicts between software transactions. It decouples structural operation from abstract operations to  rebalance when contention disappears. 
Some red-black trees were optimized for hardware transactional memory and compared with bottom-up and top-down fine-grained locking techniques~\cite{SNGK15}.
The contention-friendly tree~\cite{CGR13b} is a lock-based partially-external binary search tree that provides lock-free lookups that rebalances  when contention disappears.
The logical ordering tree~\cite{DVY14} combines the lock-free lookup with on-time removal during deletes.

The first lock-free tree proposal~\cite{EFRB10} uses a single-word CAS to implement a non-blocking binary search trees and does not rebalance. 
%
%
Howley and Jones~\cite{HJ12} proposed an internal lock-free binary search tree where each node keeps track of the operation currently modifying it. 
Chatterjee et al.~\cite{CNT14} proposed a lock-free binary search tree, but we are not aware of any 
implementation. 
Natarajan and Mittal~\cite{NM14} proposed an efficient lock-free binary search tree implementation that uses edge markers.
It outperforms both the lock-free binary search trees from Howley and Jones~\cite{HJ12} and Ellen et al.~\cite{EFRB10}.
%

\bibliography{references}
 \section{Proof of correctness}
\label{sec:correctness}

In general, the correctness of the parallel algorithm is carried by the proofs of linearizability and
deadlock-freedom. In our paper we add additional constraints on the possible executions of our algorithm:
they have to carry the observably correct schedules. We consider the schedule to be observably correct if it satisfies three conditions:
the prefix of the schedule is linearizable; at any time the tree is a BST; and the algorithm never links the unlinked node back.
This notion could be formally defined as follows.

\begin{definition}
A schedule is \textit{observably correct} if each of its prefixes $\sigma$ satisfies the following conditions:
\begin{itemize}
\item subsequence of high-level invocations and responses of operations that made a write in $\sigma$
has a linearization with respect to the \lit{set} type;
\item the data strucure after performing $\sigma$ is a BST $B$;
\item BST after performing $\sigma$ does no contain a node $x$ such that there exist $\sigma'$ and $\sigma''$, such that
$\sigma'$ is a prefix of $\sigma''$, $\sigma''$ is a prefix of $\sigma$, $x$ is in the BST after $\sigma'$, and $x$ is not in the BST
after $\sigma''$.
\end{itemize}
\end{definition}

The theorem about the correctness of the algorithm could be stated as follows.

\begin{theorem}
The algorithm is correct if:
\begin{itemize}
\item the schedule corresponding to any execution of the algorithm is observably correct.
\item the algorithm is deadlock-free.
\end{itemize}
\end{theorem}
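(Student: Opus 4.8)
The plan is to prove the two bullets separately, since the final theorem reduces correctness to (a) every execution's schedule is observably correct and (b) deadlock-freedom; the first is exactly Theorem~\ref{th:corr}, so the real work is establishing deadlock-freedom and then checking that observable correctness together with deadlock-freedom actually yields the high-level correctness we want (linearizability plus the structural invariants).

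\textbf{Observable correctness of every execution.}
First I would invoke Theorem~\ref{th:corr} directly for the first bullet. To make the argument self-contained, I would still sketch why the schedule of any run is observably correct: every write to a shared field (a \texttt{left}/\texttt{right} pointer, a \texttt{state}, or a \texttt{deleted} mark) is guarded by a critical section whose conditional lock re-validates the exact precondition the sequential code assumes --- the guarded child reference or value is unchanged and the enclosing node is not \texttt{deleted}. Hence at the moment of each write, the local view is indistinguishable from a sequential execution, so the modification preserves the BST invariants (tree shape, value property, routing nodes have two children) and never re-links a physically deleted node (the \texttt{deleted} flag, set before unlinking and checked by every conditional lock, forbids this). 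Linearizing each updating operation at its unique write event (Line~\ref{line:ins:lin}, Line~\ref{line:ins-null:lin:L}/\ref{line:ins-null:lin:R}, Line~\ref{line:del-2:lin}, the \texttt{deleted}-setting lines, etc.) and each read-only \lit{contains} at the point its \texttt{curr} was last consistent yields the required linearization with respect to the \lit{set} type. This is the content already claimed; I would cite it rather than reprove it.

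\textbf{Deadlock-freedom.}
The core new obligation is the second bullet. Here the plan is a standard wait-for / lock-ordering argument adapted to the conditional read-write locks. Key steps, in order: (1)~observe that \lit{contains} and the traversal phase of updates take \emph{no} locks, so only the commit phases of \lit{insert} and \lit{delete} can block; (2)~identify, for each commit phase, the set of locks it acquires --- at most two or three per operation, always on nodes at consecutive depths ($\ms{curr}$ and its child, $\ms{prev}$ and $\ms{curr}$, or $\ms{gprev}$, $\ms{prev}$, $\ms{curr}$) --- and note that within a node the \texttt{state} lock and the two edge locks are distinct resources; (3)~show that the acquisition order is consistent with a fixed global partial order, namely decreasing distance from the root (equivalently: an operation always locks an edge before locking the child endpoint's state, and a parent edge before a grandparent edge is irrelevant since those go \emph{up}). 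The subtlety is that some routines (the one-child delete, the routing-leaf delete) lock $\ms{prev}$'s edge to $\ms{curr}$ \emph{and} $\ms{gprev}$'s edge to $\ms{prev}$, i.e.\ they move upward, so a naive ``always lock downward'' claim fails; I would instead argue that the conditional locks are \emph{try}-locks: a failed conditional acquisition does not block but triggers \texttt{Restart operation}, which releases all locks the operation holds. Thus no operation ever waits while holding a lock on a resource another waiting operation needs; every lock is either granted immediately or the caller retreats. (4)~From this, conclude there is no cycle in the wait-for graph at any instant, hence the system is deadlock-free (indeed lock-free in the commit phase modulo the retry of the whole operation).

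\textbf{Assembling the two bullets.}
Finally I would combine them: observable correctness gives, in particular, that the subsequence of completed updating operations has a \lit{set}-linearization at every prefix, and deadlock-freedom guarantees that every correctly-invoked operation eventually takes a step toward completing (a blocked \lit{insert}/\lit{delete} either acquires its locks or restarts, and restarts are caused only by concurrent structural changes, each of which corresponds to another operation making progress). Together these are precisely the standard meaning of ``the algorithm is correct'' --- linearizable and live --- so the theorem follows.

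\textbf{Expected main obstacle.}
The hardest part is the deadlock-freedom argument around the upward-moving lock acquisitions in the one-child and routing-leaf deletions: one must be careful that the \emph{order} in which $\ms{curr}$-edge, $\ms{prev}$-edge and $\ms{gprev}$-edge locks are taken, combined with the fact that concurrent operations at overlapping heights may request the same edge locks in opposite relative order, cannot create a hold-and-wait cycle. The resolution --- that all relevant acquisitions are conditional \emph{try}-locks whose failure forces a full release-and-restart rather than a blocking wait --- has to be stated precisely and checked line by line against Algorithm~\ref{alg:concur}, in particular verifying that \texttt{Restart operation} is always reached \emph{before} any further blocking acquisition and that it indeed drops every lock held. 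Getting this bookkeeping exactly right is where the proof spends most of its effort.
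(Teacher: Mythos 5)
Your decomposition (observable correctness via Theorem~\ref{th:corr}, then deadlock-freedom, then assembly) matches the paper's, and the first part is unobjectionable since you defer to the paper's structural-invariant and linearizability arguments. The gap is in the deadlock-freedom step. The property the paper proves is a global progress guarantee --- ``at least one live thread makes progress by completing infinitely many operations'' --- which is strictly stronger than the absence of a hold-and-wait cycle that your try-lock argument establishes. If, as you posit, every failed conditional acquisition immediately retreats and restarts the whole operation, then nothing in your argument excludes a livelock in which two updates repeatedly fail each other's lock acquisitions purely because of contention (not because any validated condition changed), both restart, and neither ever completes; ``no cycle in the wait-for graph'' does not yield ``some thread completes infinitely many operations.''

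The paper closes this differently, and you discarded exactly the tool it uses. Locks are acquired from children to parents ($\ms{curr}$'s edge, then $\ms{prev}$'s edge, then $\ms{gprev}$'s edge --- this upward direction \emph{is} the consistent global order, not a defect of the ordering argument as your ``expected obstacle'' paragraph suggests). The paper then considers the \emph{highest} conflicting node: the thread $t_2$ holding that lock faces no further conflict on the remaining (higher) locks it needs, so it completes and makes progress; in the only other failure mode --- a locking \emph{condition} is violated --- some other thread must already have performed the invalidating write, i.e., already made progress. Note also that the paper's proof explicitly treats ``another thread currently holds the lock'' as a waiting case resolved by this ordering argument, distinct from the condition-violation case that triggers a restart, which contradicts your reading that any failed acquisition retreats. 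You do gesture at the right idea in your assembly paragraph (``restarts are caused only by concurrent structural changes''), but that covers only condition-violation restarts, not pure contention; to repair the proof you need either the paper's highest-conflicting-node argument for the contention case, or a separate argument that contention-induced retreats cannot recur forever without some operation completing.
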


We split our proof into three parts: the structural properties, i.e., the tree is a BST and an unlinked node cannot be linked back,
the linearizability and deadlock-freedom. 

\subsection{Structural correctness}
At first, we prove that our search tree satisfies the structural properties at any point in time, i.e.,
the second and the third property of observably correctness.
Later we refer to these properties as Properties 1, 2, 3 and 4.
\begin{theorem}
The following properties are satisfied at any point of time during the execution:
\begin{itemize}
\item The value property of BST is preserved.
\item Every routing node has two children.
\item Any non-physically deleted node is reachable from the root.
\item Any physically deleted node is non-reachable from the root.
\end{itemize}
\end{theorem}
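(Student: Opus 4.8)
The plan is to prove the four structural invariants simultaneously by induction on the length of the execution, since they are deeply intertwined — in particular, the argument that a write preserves the value property relies on the node being written still being reachable (Property~3) and not already physically deleted, and the argument about physical deletion relies on the value property to know which child pointer is being redirected. So I would fix an execution, assume all four properties hold after every prefix of length $<k$, and examine the $k$-th event. Only writes to \textit{left}, \textit{right}, \textit{state}, and \textit{deleted} fields, and node-creation events, can change the tree; reads and high-level invocations/responses are irrelevant. Node creation (the \textit{newNode} allocation) produces an isolated node not reachable from \textit{root}, so it trivially preserves all four invariants; thus the real work is the handful of write lines in Algorithm~\ref{alg:concur}.

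The core of the argument is a case analysis over the write lines, grouped by the operation fragment that issues them: the \textbf{insert} lines (\ref{line:ins:lin}, \ref{line:ins-null:lin:L}, \ref{line:ins-null:lin:R}); the \textbf{delete-with-two-children} line (\ref{line:del-2:lin}); the \textbf{delete-with-one-child} lines (\ref{line:del-1:ldel:L}, \ref{line:del-1:lin:L}, and their right-handed twins); the \textbf{delete-leaf-DATA-parent} lines (\ref{line:del-0:data:ldel:L}, \ref{line:del-0:data:lin:L}, \ldots); and the \textbf{delete-leaf-ROUTING-parent} lines (\ref{line:del-0:routing:ldel:prev:L}, \ref{line:del-0:routing:ldel:curr:L}, \ref{line:del-0:routing:lin:L}, \ldots). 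For each group I would argue: (a) the conditional locks held at that point — e.g. \lit{tryLockEdgeRef}/\lit{tryLockEdgeVal} on the relevant edges and \lit{tryWriteLockState}/\lit{tryReadLockState} on the relevant nodes, together with the re-validation \texttt{if} checks (Lines~\ref{line:del-2:curr:2children}, \ref{line:del-1:curr:1child}, \ref{line:del-0:curr:leaf}) and the \textit{deleted}-mark checks (Lines~\ref{line:ins-null:prev:deleted:L}, \ref{line:ins-null:prev:deleted:R}) — guarantee that at the instant of the write the local picture (which node is \textit{curr}, which is its parent, how many children it has, that it is still reachable and not yet physically deleted) coincides with the true global tree state; (b) given that, the single field update is exactly the corresponding sequential transformation, which — by the same elementary reasoning as for the sequential partially-external BST — maps a BST to a BST, keeps every routing node binary, and unlinks precisely the node(s) whose \textit{deleted} flag was just set while relinking everything still live. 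The key sub-claims are that the \textit{deleted} write always strictly precedes (in program order, hence in the schedule) the pointer write that physically unlinks the node, and that no later write can ever make a node with \textit{deleted} $=\true$ reachable again — which holds because every edge write either points to a freshly created node, or to a \textit{child} read from a node whose own edge-lock certified \textit{child} was live, so a deleted node is never the target of a relink.

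The step I expect to be the main obstacle is establishing part~(a) rigorously for the two-child and one-child deletions and for the routing-parent leaf deletion, because these involve locks on several nodes/edges acquired at different times, and between acquisitions a concurrent operation might have changed the child count or deleted an intermediate node. The crux is to show the re-validation checks after lock acquisition (``$\ms{curr}$ does not have 2 children'', ``$\ms{curr}$ has 0 or 2 children'', ``$\ms{curr}$ is not a leaf'') together with the fact that holding the edge-lock on $(\textit{prev},\textit{curr})$ freezes that edge (no concurrent writer can pass its \lit{tryLockEdge} condition) are jointly sufficient to rule out every interleaving in which the write would corrupt the structure — and in particular that a node whose edge-lock is held by an active operation cannot be concurrently physically deleted, so \textit{prev}, \textit{curr}, \textit{gprev} remain the genuine ancestors/descendants the fragment believes them to be. Once these "frozen-region" lemmas are in place, the per-line verification in part~(b) is the same routine case check one does for the sequential algorithm, and the induction closes.
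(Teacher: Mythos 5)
Your proposal is correct in spirit but takes a genuinely different, and much more self-contained, route than the paper. The paper's own ``proof'' of this theorem is essentially a deferral: for the first two properties (value property and binary routing nodes) it simply cites the earlier partially-external trees of Bronson et al.\ and Crain et al.\ and claims the argument carries over, and for the last two (reachability of live nodes, unreachability of physically deleted ones) it gives a single sentence observing that physical deletion is performed under locks. You instead propose a direct simultaneous induction over the events of the execution, with an explicit case analysis of every write line of Algorithm~2, using the conditional edge/state locks and the post-lock re-validation checks to argue that the local view coincides with the global tree state at the instant of each write, plus the two key sub-claims (the \emph{deleted} flag is set before the unlinking pointer write, and no edge write can ever target an already-deleted node because the edge locks ``freeze'' the relevant region). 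This is the argument the paper implicitly relies on but never writes down; your version buys rigor and self-containment at the cost of length, and you have correctly isolated the genuinely delicate part --- justifying that the multi-node lock acquisitions in the one-child and routing-parent deletion cases pin down \textit{gprev}, \textit{prev}, \textit{curr} and \textit{child} as the true current neighbours despite the locks being taken at different times. To turn the plan into a complete proof you would still need to state and prove those ``frozen-region'' lemmas explicitly (in particular, that holding the edge lock on the parent-to-child link prevents any concurrent physical deletion from redirecting that link, since every such deletion must itself pass a \textit{tryLockEdge} condition on the same lock), but the overall structure is sound and strictly stronger than what the paper provides.
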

\begin{proof}
The first two properties are non-trivial by themselves, but we could refer to papers \cite{BCCO10} and \cite{CGR13b} that
use the similar partially-external algorithm.

The last two properties follows in a straightforward way from the fact that
during physical deletion the algorithm takes locks.
\end{proof}

\subsection{Linearizability}
To prove the linearizability of our algorithm, we need to define the linearization points of
\lit{insert}, \lit{delete} and \lit{contains} operations.
When defined the linearization points it could be straightforwardly seen that if the execution
is linearizable then each prefix of the corresponding schedule is linearizable. So, for us, it will
be enough just to prove that any execution is linearizable.

\textbf{High-level histories and linearizability.}
A \textit{high-level history} $\widetilde{H}$ of an execution $\alpha$ is the subsequence of $\alpha$
consisting of all invocations and responses of (high-level) operations.

A complete high-level history $\widetilde{H}$ is \textit{linearizable} with respect to an object type $\tau$ 
if there exists a sequential high-level history $S$ equivalent to $\widetilde{H}$ such that 
\begin{enumerate}
  \item $\to_{\widetilde{H}} \subseteq \to_S$
  \item $S$ is consistent with the sequential specification of type $\tau$.
\end{enumerate}

Now a high-level history $\widetilde{H}$ is linearizable if it can be \textit{completed} (by adding matching responses to a subset
of incomplete operations in $\widetilde{H}$ and removing the rest) to a linearizable high-level history.

\textbf{Completions.} We obtain a completion $\tilde{H}$ of history $H$ as follows. The invocation of an incomplete
contains operation is discarded. The invocation of an incomplete $\pi = \lit{insert}$ operation that has not performed
a write at Lines \ref{line:ins:curr}, \ref{line:ins-null:lin:L} (\ref{line:ins-null:lin:R}) of the Algorithm~\ref{alg:concur} are discarded;
otherwise, $\pi$ is completed with the response $\true$.
The invocation of an incomplete $\pi = \lit{delete}$ operation that has
not performed a write at Lines \ref{line:del-2:lin}, \ref{line:del-1:lin:L} (\ref{line:del-1:lin:R}),
\ref{line:del-0:data:lin:L} (\ref{line:del-0:data:lin:R}), \ref{line:del-0:routing:lin:L} (\ref{line:del-0:routing:lin:R})
of the Algorithm~\ref{alg:concur} is discarded;
otherwise, it is completed with the response $\true$.

Note, that the described completions correspond to the completions in which the
completed operations made at least write of the sequential algorithm.

\textbf{Linearization points.} We obtain a sequential high-level history $\tilde{S}$ equivalent to $\tilde{H}$ 
by associating a linearization point $l_{\pi}$ with each operation $\pi$.
In some cases, our choice of the linearization point depends on the time interval between the invocation and the response of
 the execution of $\pi$, later referred to as the interval of $\pi$. 
For example, the linearization point of $\pi$ in the timeline should lie in the interval of $\pi$.

Below we specify the linearization point of the operation $\pi$ depending on its type.

\textbf{Insert.}
For $\pi = \lit{insert}(v)$ that returns $\true$, we have two cases:
\begin{enumerate}
\item A node with key $v$ was found in the tree. Then $l_{\pi}$ is associated with the write in Line \ref{line:ins:lin}
of the Algorithm~\ref{alg:concur}.
\item A node with key $v$ was not found in the tree. 
Then $l_{\pi}$ is associated with the writes in Lines \ref{line:ins-null:lin:L} or \ref{line:ins-null:lin:R} of the Algorithm~\ref{alg:concur},
depending on whether the inserted node is left or right child.
\end{enumerate}
For $\pi = \lit{insert}(v)$ that returns $\false$, we have three cases:
\begin{enumerate}
\item If there exists a successful $\lit{insert}(v)$ whose linearization point lies in the interval of $\pi$,
then we take the first such $\pi'= \lit{insert(v)}$ and linearize right after $l_{\pi'}$.
\item If there exists a successful $\lit{delete}(v)$ whose linearization point lies in the interval of $\pi$,
then we take the first such $\pi' = \lit{delete(v)}$ and linearize right before $l_{\pi'}$.
\item Otherwise, $l_{\pi}$ is the call point of $\pi$.
\end{enumerate}

\textbf{Delete.}
For $\pi = \lit{delete}(v)$ that returns $\true$ we have four cases, 
depending on the number of children of the node with key $v$, i.e., the node $\ms{curr}$:
\begin{enumerate}
\item $\ms{curr}$ has two children. Then $l_{\pi}$ is associated with the write in Line \ref{line:del-2:lin} of the Algorithm~\ref{alg:concur}.
\item $\ms{curr}$ has one child. Then $l_{\pi}$ is associated between the writes in
Line \ref{line:del-1:ldel:L} (\ref{line:del-1:ldel:R}) and in Line \ref{line:del-1:lin:L} (\ref{line:del-1:lin:R})
of the Algorithm~\ref{alg:concur}, depending on whether $\ms{curr}$ is left or right child.
The exact position is calculated as what comes last: Line \ref{line:del-1:ldel:L} (\ref{line:del-1:ldel:R}) or
the last invocation of unsuccessful $\lit{insert}(v)$ or $\lit{contains}(v)$
that reads the node $curr$.
\item $\ms{curr}$ is a leaf with a data parent. Then $l_{\pi}$ is associated between the writes in
Line \ref{line:del-0:data:ldel:L} (\ref{line:del-0:data:ldel:R}) and in Line \ref{line:del-0:data:lin:L} (\ref{line:del-0:data:lin:R})
of the Algorithm~\ref{alg:concur}, 
depending on whether $\ms{curr}$ is left or right child.
The exact position is calculated as what comes last: Line \ref{line:del-0:data:ldel:L} (\ref{line:del-0:data:ldel:R}) or
the last invocation of unsuccessful $\lit{insert}(v)$ or $\lit{contains}(v)$
that reads the node $curr$.
\item $\ms{curr}$ is a leaf with a routing parent. Then $l_{\pi}$ is associated between the writes in
Line \ref{line:del-0:routing:ldel:curr:L} (\ref{line:del-0:routing:ldel:curr:R}) and in Line
\ref{line:del-0:routing:lin:L} (\ref{line:del-0:routing:lin:R}) of the Algorithm~\ref{alg:concur}, 
depending on whether $\ms{prev}$ is left or right child.
The exact position is calculated as what comes last: Line \ref{line:del-0:routing:ldel:curr:L}
(\ref{line:del-0:routing:ldel:curr:R}) or the last invocation of unsuccessful $\lit{insert}(v)$ or $\lit{contains}(v)$
that reads the node $curr$.
\end{enumerate}
For every $\pi = \lit{delete}(v)$ that returns $\false$, we have three cases:
\begin{enumerate}
\item If there exists a successful $\lit{delete}(v)$ whose linearization point lies in the interval of $\pi$,
then we take the first such $\pi' = \lit{delete}(v)$ and linearize right after $l_{\pi'}$.
\item If there exists successful $\lit{insert}(v)$ whose linearization point lies in the interval of $\pi$,
then we take the first such $\pi' = \lit{insert}(v)$ and linearize right before $l_{\pi'}$.
\item Otherwise, $l_{\pi}$ is the invocation point of $\pi$.
\end{enumerate}

\textbf{Contains.}
For $\pi = \lit{contains}(v)$ that returns $\true$, we have three cases:
\begin{enumerate}
\item If there exists successful $\lit{insert}(v)$ whose linearization point lies in the interval of $\pi$,
then we take the first such $\pi' = \lit{insert(v)}$ and linearize right after $l_{\pi'}$.
\item If there exists successful $\lit{delete}(v)$ whose linearization point lies in the interval of $\pi$,
then we take the first such $\pi' = \lit{delete}(v)$ and linearize right before $l_{\pi'}$.
\item Otherwise, $l_{\pi}$ is the invocation point of $\pi$.
\end{enumerate}
For $\pi = \lit{contains}(v)$ that returns $\false$, we have three cases:
\begin{enumerate}
\item If there exists successful $\lit{delete}(v)$ whose linearization point lies in the interval of $\pi$,
then we take the first such $\pi' = \lit{delete}(v)$ and linearize right after $l_{\pi'}$.
\item If there exists successful $\lit{insert}(v)$ which linearization point lies in the interval of $\pi$,
then we take the first such $\pi' = \lit{insert}(v)$ and linearize right before $l_{\pi'}$.
\item Otherwise, $l_{\pi}$ is the invocation point of $\pi$.
\end{enumerate}

To confirm our choice of linearization points, we need 
an auxiliary lemma.

\begin{lemma}
\label{lem:traverse:exact}
Consider the call $\pi = traverse(v)$. If BST at the moment of the invocation of $\pi$
contains the node $u$ with value $v$ 
and there is no linearization point of successful $delete(v)$ operation in the interval of $\pi$,
then $\pi$ returns $u$.
\end{lemma}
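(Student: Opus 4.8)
The plan is to follow the node $x$ that $\pi$ currently visits (the variable $curr$) and to prove, by induction on the steps of $\pi$, that at the instant $\pi$ sets $curr\gets x$ the sequential binary search for $v$ started from $x$ still reaches $u$. Applied at the moment $\pi$ stops, this invariant forces $curr=u$.

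Before the induction I would record three facts that the earlier structural analysis makes available. First, the $\mathit{val}$ field of a node is written once, at creation, so $u$ keeps value $v$ throughout. Second, a child pointer of a node is written only (a)~by an $\lit{insert}$ that overwrites a $null$ pointer, or (b)~by a $\lit{delete}$ that redirects a pointer of the \emph{current} parent of the node it physically removes; hence, once a node is physically unlinked its own child pointers are frozen forever. Third, since no successful $delete(v)$ has its linearization point inside the interval of $\pi$, and the only operations that unlink the data node of value $v$ or turn it into a routing node are instances of $delete(v)$, the node $u$ stays in the tree, stays a data node, and keeps value $v$ for the whole interval of $\pi$ (here the hypothesis ``the BST contains $u$ with value $v$'' is read as ``$v$ is in the set'', so $u$ is data).

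The invariant is: whenever $\pi$ executes $curr\gets x$ with $x\neq null$, at that instant there is a directed chain $x=q_0,q_1,\dots,q_m=u$ in which each $q_{j+1}$ is the child of $q_j$ in the direction chosen by comparing $v$ with $q_j.\mathit{val}$, and $q_j.\mathit{val}\neq v$ for $j<m$. The base case $x=root$ is immediate: at the invocation of $\pi$ the structure is a BST that contains $u$, so the search for $v$ from the root reaches $u$. For the inductive step, assume the invariant when $curr$ is set to $x$. If $x.\mathit{val}=v$ then $x=u$ by uniqueness of values in a BST and $\pi$ returns $u$. Otherwise $\pi$ later reads the child $y$ of $x$ in the search direction; the crux is that the chain promised by the induction hypothesis still reaches $u$ at that later instant, because every edge on it is either frozen (its tail has been unlinked) or, if its tail is still in the tree, has been redirected by a concurrent $\lit{delete}$ of some value $\neq v$ to the node that replaces the removed one --- which again reaches $u$, since deletions preserve the BST property and keep $u$ a search-descendant --- while no edge on it can be freshly created, as $\lit{insert}$ only overwrites $null$ pointers. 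Hence $y$ heads a chain to $u$, and the invariant carries over to $y$. Finally, the invariant forbids $\pi$ from stopping with $curr=null$ (the last vertex of the chain, $u$, is not $null$), so $\pi$ stops with $curr.\mathit{val}=v$, i.e.\ $curr=u$.

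The main obstacle is exactly this crux: because $traverse$ performs no synchronization, it may step onto nodes that concurrent deletions have already unlinked, so one cannot reason inside a single snapshot of the tree. The resolution is the frozen-pointer observation together with a case analysis over the three delete variants (node with one child, leaf with a data parent, leaf with a routing parent) and over whether the affected node is still in the tree: in each case the redirection replaces a node by one of its children and leaves the property ``the search for $v$ reaches $u$'' intact. The leaf-with-routing-parent case is the one requiring the most care (it can unlink a routing node together with a leaf child), and it is there that reading the hypothesis as ``$u$ is a data node'' is used.
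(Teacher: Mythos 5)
Your proof is correct and follows essentially the same route as the paper's: both arguments track the search path from the traversal's current node to $u$ and show by induction that concurrent physical deletions can only shortcut this path (each redirection replaces a node by its proper-direction descendant on the same path, and unlinked nodes keep frozen pointers) while insertions, which only overwrite null pointers, cannot disturb it, so the traversal moves monotonically toward $u$ --- the paper phrases this with a fixed ancestor list $A(u)$ taken at invocation time, you with a dynamically maintained chain, but the content is identical. Your explicit reading of the hypothesis as ``$u$ is a \emph{data} node'' is a refinement the paper's statement leaves implicit but its proof silently needs (a routing node with value $v$ can be unlinked by a $\lit{delete}$ of a different key without any successful $\lit{delete}(v)$ linearization point), so that reading is the right one.
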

\begin{proof}
Consider a list $A(u)$ of ancestors of node $u$: $root = w_1, \ldots, w_{n - 1}, w_n = u$ (starting from the root)
in BST at the moment of the invocation of $\pi$.

Let us prove that at any point of time the child of $w_i$ in the direction of the value $v$ is $w_j$ for some $j > i$.
The only way for $w_i$ to change the proper child is to perform a physical deletion on this child.
Consider the physical deletions of $w_i$ in their order in execution.
In a base case, when no deletions happened, our invariant is satisfied. Suppose, we operated first $k$ deletions and now we consider
a deletion of $w_j$. Let $w_i$ be an ancestor of $w_j$ and $w_k$ be a child of $w_j$ in proper direction. After relinking $w_k$ becomes a child
of $w_i$ in proper direction, so the invariant is satisfied for $w_i$ because $i \leq j \leq k$, while the children of other vertices remain unchanged.

Summing up, $\pi$ starts at $root$, i.e., $w_1$, and traverse only the vertices from $A(u)$ in strictly increasing order. Thus $\pi$ eventually reaches $u$ and returns it.
\end{proof}

\begin{theorem}[Linearizability]
\label{th:lin}
The algorithm is linearizable with respect to the set type.
\end{theorem}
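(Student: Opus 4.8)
The plan is to exhibit, for the completed high-level history $\widetilde{H}$ described above, a sequential history $\widetilde{S}$ witnessing linearizability, taking $\widetilde{S}$ to be exactly the order induced by the assigned linearization points $l_\pi$. Two things must then be verified: that each $l_\pi$ lies inside the interval of $\pi$ (so that $\to_{\widetilde{H}}\subseteq\to_{\widetilde{S}}$), and that $\widetilde{S}$ is legal for the \lit{set} type. For the latter it is convenient to track an \emph{abstract state}: for any point $t$ of the execution between two consecutive atomic steps, let $\mathcal{S}_t$ be the set of values carried by nodes that are reachable from \lit{root} and have state $\mathit{DATA}$. By the structural correctness theorem (Properties~1--4) the reachable part of the tree is always a BST, so $\mathcal{S}_t$ is well defined, and it suffices to show that every operation returns the response mandated by the \lit{set} specification applied to $\mathcal{S}$ at the instant $l_\pi$. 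The completion rule was chosen precisely so that every discarded operation performed none of the ``linearizing writes'' below, so discarded operations may be ignored.

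The first ingredient is a characterization of how $\mathcal{S}_t$ evolves: I would prove that $\mathcal{S}_t$ changes only at the writes singled out as linearization points of successful updates --- Line~\ref{line:ins:lin} and Lines~\ref{line:ins-null:lin:L}/\ref{line:ins-null:lin:R} for \lit{insert}, and Line~\ref{line:del-2:lin}, Lines~\ref{line:del-1:lin:L}/\ref{line:del-1:lin:R}, \ref{line:del-0:data:lin:L}/\ref{line:del-0:data:lin:R}, \ref{line:del-0:routing:lin:L}/\ref{line:del-0:routing:lin:R} for \lit{delete} --- and that each such write effects exactly the intended change (insertion or removal of the single value $v$). This is where the conditional read-write locks do their job: the conditional write-lock on $\mathit{curr}$'s state at Line~\ref{line:ins:curr} succeeds only if $\mathit{curr}$ is still routing and undeleted (hence reachable), so flipping it to $\mathit{DATA}$ genuinely adds $v$; the null-edge conditional lock together with the $\mathit{deleted}$ re-check in the ``insert new node'' branch, combined with the value property, guarantees $v$ was absent just before; and the re-checks of the number of children after re-acquiring the state lock (Lines~\ref{line:del-2:curr:2children}, \ref{line:del-1:curr:1child}, \ref{line:del-0:curr:leaf}) together with the edge locks rule out ``lost updates'' and stale modifications in the \lit{delete} branches. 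The second ingredient is a bridge between traversal outcomes and $\mathcal{S}$: Lemma~\ref{lem:traverse:exact} already says that if $v\in\mathcal{S}$ at the invocation of $\mathit{traversal}(v)$ and no successful $\lit{delete}(v)$ linearizes in its interval then the traversal returns the node with value $v$; I would prove the companion facts that if $v\notin\mathcal{S}$ throughout the interval then the traversal returns $\mathit{null}$ or a node that is not an undeleted $\mathit{DATA}$ node with value $v$, and that whatever node the traversal returns was reachable at some point of the interval.

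With these in hand the argument is a case analysis over the operation type and its response. For a successful $\lit{insert}(v)$ or $\lit{delete}(v)$, the previous paragraph gives that $v$ switches (absent $\to$ present, resp.\ present $\to$ absent) at the linearizing write straddled by $l_\pi$, and the re-check guards plus the traversal result give the opposite membership immediately before, so the response $\true$ is correct. For an unsuccessful $\lit{insert}(v)$: by construction $l_\pi$ sits right after the first concurrent successful $\lit{insert}(v)$ (whence $v\in\mathcal{S}$ at $l_\pi$), or right before the first concurrent successful $\lit{delete}(v)$ (again $v\in\mathcal{S}$ at $l_\pi$), or --- when neither exists --- at the call point, and in that last case $\mathcal{S}$ does not change its $v$-membership during the interval while the operation observed an undeleted $\mathit{DATA}$ node with value $v$, so $v\in\mathcal{S}$ at the call point; hence $\false$ is correct. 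The unsuccessful $\lit{delete}(v)$ and both cases of $\lit{contains}(v)$ are symmetric. Finally, $\to_{\widetilde{H}}\subseteq\to_{\widetilde{S}}$ reduces to checking that the ``right before / right after $l_{\pi'}$'' placements can be realized inside $\pi$'s interval: $l_{\pi'}$ is by the case hypothesis inside that interval, the linearizing writes occur at pairwise distinct instants, and only finitely many operations have non-write linearization points, so they can be totally ordered in the gaps around the writes without leaving any interval.

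The main obstacle is the placement of the linearization point of a \lit{delete} in the one-child and leaf cases. There the abstract removal of $v$ happens at the \emph{unlinking} write (Line~\ref{line:del-1:lin:L}, etc.), which may be strictly later than the logical-deletion write (Line~\ref{line:del-1:ldel:L}, etc.), and $l_\pi$ is pinned somewhere in between by the rule ``whichever comes last: the logical-deletion write, or the last read of $\mathit{curr}$ by a concurrent unsuccessful $\lit{insert}(v)$ or $\lit{contains}(v)$''. One must show this instant is simultaneously (i) inside $\pi$'s interval, (ii) after every concurrent operation that observed $\mathit{curr}$ as a live $\mathit{DATA}$ node --- so that those operations, which linearize before $l_\pi$, correctly see $v$ present --- and (iii) before any operation that could \emph{not} have observed $\mathit{curr}$, so that those linearize after $l_\pi$ and correctly see $v$ absent. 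The key technical claim behind (ii)--(iii) is that a wait-free traversal that returns $\mathit{curr}$ must have read the relevant parent pointer before the unlinking write, which --- using the field-caching convention --- confines all ``reads of $\mathit{curr}$'' to before the unlinking, so the whole argument closes. Establishing this confinement, together with the analogous stability facts needed for the ``otherwise'' subcases above, is the real content of the proof; the remaining left/right and number-of-children branches are symmetric bookkeeping.
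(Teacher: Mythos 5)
Your proposal is correct in outline but organizes the proof quite differently from the paper. The paper fixes the same completion rule and the same linearization points, but then establishes legality of the induced sequential history by a three-stage induction on prefixes of the linearization order (first successful updates only, then all updates, then all operations including \lit{contains}), where the inductive step for an operation $\pi$ on key $v$ is a proof by contradiction about what the \emph{last preceding operation with argument $v$} could have been, using the structural Properties 1--4, the lock acquisitions guarding each linearizing write, and Lemma~\ref{lem:traverse:exact}. You instead propose an abstraction-function argument: define $\mathcal{S}_t$ as the set of reachable \emph{DATA} values, show it changes exactly at the linearizing writes of successful updates, and check each response against $\mathcal{S}$ at $l_\pi$. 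Your route is more modular and makes the role of the conditional locks explicit (each guarded write atomically effects one abstract transition); the paper's route avoids reasoning about the abstract state in the window between logical deletion and physical unlinking, which is precisely where your stated sufficiency criterion breaks down: for a one-child or leaf \lit{delete}, $\mathcal{S}_t$ does not lose $v$ until the unlinking write, while $l_\pi$ may be pinned strictly earlier, so an operation linearized inside that window does \emph{not} see the response mandated by $\mathcal{S}$ at its own linearization point. You correctly flag this as the main obstacle, and your fix --- ordering $l_\pi$ after every concurrent reader of $\mathit{curr}$ and confining all reads of $\mathit{curr}$ to before the unlinking --- is the right one, but note that it effectively abandons the ``evaluate the spec on $\mathcal{S}_{l_\pi}$'' test for those cases and reverts to the paper's style of pairwise ordering argument; the criterion should therefore be stated from the start as legality of the sequential history induced by the linearization points rather than pointwise agreement with $\mathcal{S}_t$. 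Modulo that reformulation, the lemmas you propose (atomicity of the abstract transitions, the companion traversal facts, and the read-confinement claim) are all provable from the same ingredients the paper uses and together yield the theorem.
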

\begin{proof}
First, we prove the linearizability of the subhistory with only successful $\lit{insert}$ and $\lit{delete}$ operations
because other operations do not affect the structure of the tree.
Then we prove the linearizability of the subhistory with only update operations, 
i.e., successful and unsuccessful $\lit{insert}(v)$ and $\lit{delete}(v)$.
And finally, we present the proof for the history with all types of operations.

\textbf{Successful update functions.}
Let $\tilde{S}_{succ}^k$ be the prefix of $\tilde{S}$ consisting of the first $k$ 
complete successful operations $\lit{insert}(v)$ or $\lit{delete}(v)$ with respect to their linearization points.
We prove by induction on $k$ that the sequence $\tilde{S}_{succ}^k$ is consistent with respect to the \emph{set} type.

The base case $k = 0$, i.e., there are no complete operations, is trivial.

The transition from $k$ to $k + 1$. Suppose that $\tilde{S}_{succ}^k$ is consistent with the \textit{set} type.
Let $\pi$ with argument $v \in \mathbb{Z}$ and its response $r_{\pi}$ be the last operation in $\tilde{S}_{succ}^{k + 1}$.
We want to prove that $\tilde{S}_{succ}^{k + 1}$ is consistent with $\pi$. For that, we check all possible types of $\pi$.

\begin{enumerate}
\item $\pi = \lit{insert}(v)$ returns $\true$.

  By induction, it is enough to prove that there are no preceding operation with an argument $v$ 
  or the last preceding operation with an argument $v$ in $\tilde{S}_{succ}^{k + 1}$ is $\lit{delete}(v)$.
  Suppose the contrary: let the last preceding operation with an argument $v$ be $\pi' = \lit{insert}(v)$.
  We need to investigate two cases of insertion: whether $\pi$ finds the node with value $v$ in the tree or not.

  In the first case, $\pi$ finds a node $u$ with value $v$.
  $\pi'$ should have inserted or modified $u$. Otherwise, the BST at $l_{\pi}$
  would contain two vertices with value $v$ and this fact violates Property 1. If $\pi'$ has inserted $u$, then
  $\pi$ has no choice but only to read the state of $u$ as data, which is impossible because $\pi$ is successful.
  If $\pi'$ has changed the state of $u$ to data, then $\pi$ has to read the state of $u$ as data,
  because the linearization points of $\pi'$ and $\pi$
  are guarded by the lock on state. This contradicts the fact that $\pi$ is successful.
  
  In the second case, $\pi$ does not find a node with value $v$. 
  We know that $\pi$ and $\pi'$ are both successful. Suppose for a moment that $\pi$ wants to insert $v$ as a
  child of node $p$, while $\pi'$ inserts $v$ in some other place. Then
  the tree at $l_{\pi}$ has two vertices with value $v$, violating Property 1.
  This means, that $\pi$ and $\pi'$ both want to insert $v$ as a child of node $p$.
  Because $l_{\pi'}$ precedes $l_{\pi}$ and these linearization points are guarded by the
  lock on the corresponding link of $p$, $\pi'$ takes a lock first, modifies the link to a child of $p$ and by that forces
  $\pi$ to restart. During the second traversal, $\pi$ finds newly inserted node with value $v$ by Lemma \ref{lem:traverse:exact}
  and becomes unsuccessful. The latter contradicts the fact that $\pi$ is successful.

\item $\pi = \lit{delete}(v)$ returns $\true$.

  By induction it is enough to prove that the preceding operation with 
  an argument $v$ in $\tilde{S}_{succ}^{k + 1}$ is $\lit{insert}(v)$.
  Suppose the opposite: let the last preceding operation with $v$ be 
  $\lit{delete}(v)$ or there is no preceding operation with an argument $v$. 
  If there is no such operation, then $\pi$ could not
  find a node with value $v$, otherwise, another operation should have inserted this node and
  consequently its linearization point would have been earlier.
  Thus in this case, $\pi$ cannot successfully delete, which contradicts the result of $\pi$.

  The only remaining possibility is that the previous successful operation is $\pi' = \lit{delete}(v)$. 
  Because $\pi$ is successful, it finds a non-deleted node $u$ with value $v$. $\pi'$ should have
  find the same node $u$ by Lemma \ref{lem:traverse:exact}, otherwise, the BST right before $l_{\pi'}$
  would contain two vertices with value $v$, violating Property 1. So, both $\pi$ and $\pi'$
  take locks on the state of $u$ to perform an operation.
  Because $l_{\pi'}$ precedes $l_{\pi}$, $\pi'$ has taken the lock earlier and set the state of $u$ to routing or
  marks $u$ as deleted.
  When $\pi$ obtains the lock, it could not
  read state as data and, as a result, cannot delete the node. This contradicts the fact that $\pi$ is successful. 
\end{enumerate}

\textbf{Update operations.}
Let $\tilde{S}_m^k$ be the prefix of $\tilde{S}$ consisting of the first $k$ 
complete operations $\lit{insert}(v)$ or $\lit{delete}(v)$ 
with respect to their linearization points.
We prove by induction on $k$ that the sequence $\tilde{S}_m^k$ is consistent with respect to the \textit{set} type. 
We already proved that successful operations are consistent, then
we should prove that the linearization points of unsuccessful operations are consistent too.

The base case $k = 0$, i.e., there are no complete operations, is trivial.

The transition from $k$ to $k + 1$. Suppose that $\tilde{S}_{m}^k$ is consistent with the \textit{set} type.
Let $\pi$ with argument $v \in \mathbb{Z}$ and response $r_{\pi}$ be the last operation in $\tilde{S}_{m}^{k + 1}$.
We want to prove that $\tilde{S}_{m}^{k + 1}$ is consistent with $\pi$. For that, we check all the possible types of $\pi$.

If $k + 1$-th operation is successful then it is consistent with the previous operations, because it is consistent with successful operations
while unsuccessful operations do not change the structure of the tree.

If $k + 1$-th operation is unsuccessful, we have two cases.
\begin{enumerate}
\item $\pi = \lit{insert}(v)$ returns $\false$.
  When we set the linearization point of $\pi$ relying on the successful operation in the interval
  of $\pi$, the linearization point is correct: if we linearize right after successful $\lit{insert}(v)$
  then $\pi$ correctly returns $\false$; if we linearize right before successful $\pi' = \lit{delete}(v)$
  then by the proof of linearizability for successful operations there exists successful $\lit{insert}(v)$
  preceding $\pi'$, thus $\pi$ correctly returns $\false$.

  It remains to consider the case when no successful operation was linearized in the interval of $\pi$.
  By induction, it is enough to prove that the last preceding successful operation 
  with $v$ in $\tilde{S}_{m}^{k + 1}$ is $\lit{insert}(v)$.
  Suppose the opposite: let the last preceding successful operation with an argument $v$ 
  be $\lit{delete}(v)$ or there is no preceding operation with an argument $v$.
  If there is no such operation then $\pi$ could not find a node with value $v$,
  because, otherwise, another operation should have inserted the node and its linearization point would have come earlier. 
  Thus $\pi$ can successfully insert a new node with value $v$, which contradicts the fact that $\pi$ is unsuccessful.

  The only remaining possibility is that the last preceding successful operation is $\pi' = \lit{delete}(v)$.
  Since $l_{\pi'}$ does not lie inside the interval of $\pi$ then $\pi$ has to find either the routing node
  with value $v$ or do not find such node, since $\pi'$ has unlinked it. In both cases, insert operation could be
  performed successfully. 
  This contradicts the fact that $\pi$ is unsuccessful.

\item $\pi = \lit{delete}(v)$ returns $\false$.

  When we set the linearization point of $\pi$ relying on the successful operation in the interval of $\pi$, 
  the linearization point is correct: if we linearize right after successful 
  $\lit{delete}(v)$ then $\pi$ correctly returns $\false$;
  if we linearize right before successful $\pi' = \lit{insert}(v)$ 
  then by the proof of linearizability for successful operations 
  there exists successful $\lit{delete}(v)$ preceding $\pi'$ or there are no successful operation 
  with an argument $v$ in $\tilde{S}_{m}^{k + 1}$ before $\pi'$, thus $\pi$ correctly returns $\false$.

  It remains to consider the case when no successful operation was linearized in the interval of $\pi$.
  By induction, it is enough to prove that there is no preceding successful operation with $v$ or 
  the last preceding successful operation with $v$ in $\tilde{S}_{m}^{k + 1}$ is $\lit{delete}(v)$.
  Again, suppose the opposite: let the previous successful operation with $v$ be $\pi' = \lit{insert}(v)$.

  By Lemma \ref{lem:traverse:exact} $\pi$ finds the data node $u$ with value $v$ and
  $\pi$ can successfully remove it because no other operation with argument $v$ has a linearization point during the
  execution of $\pi$. This contradicts the fact that $\pi$ is unsuccessful.
\end{enumerate}

\textbf{All operations.}
Finally, we prove the correctness of the linearization points of all operations.

Let $\tilde{S}^k$ be the prefix of $\tilde{S}$ consisting of the first $k$ complete operations 
ordered by their linearization points.
We prove by induction on $k$ that the sequence $\tilde{S}^k$ is consistent with respect to the \emph{set} type. 
We already proved that update operations are consistent, then
we should prove that the linearization points of contains operations are consistent too.

The base case $k = 0$, i.e., there are no complete operations, is trivial.

The transition from $k$ to $k + 1$. Suppose that $\tilde{S}^k$ is consistent with the \emph{set} type.
Let $\pi$ with argument $v \in \mathbb{Z}$ and its response $r_{\pi}$ be the last operation in $\tilde{S}^{k + 1}$.
We want to proof, that $\tilde{S}^{k + 1}$ is consistent for the operation $\pi$. For that, we check all the possible types of $\pi$.

If $k + 1$-th operation is $\lit{insert}(v)$ and $\lit{delete}(v)$ 
then it is consistent with the previous $\lit{insert}(v)$ and $\lit{delete}(v)$ operations
while $\lit{contains}(v)$ operations do not change the structure of the tree.

If the operation is $\pi = \lit{contains}(v)$, we have two cases:
\begin{enumerate}
  \item $\pi$ returns $\true$.

  When we set the linearization point of $\pi$ relying on a successful update operation in the interval of $\pi$,
  then the linearization point is correct:
  \begin{itemize}
    \item if we linearize right after successful $\lit{insert}(v)$, then $\pi$ correctly returns $\true$.
    \item if we linearize right before successful $\pi' = \lit{delete}(v)$, then, by the proof of the linearizability
    on successful operations, there exists successful $\lit{insert}(v)$ preceding $\pi'$, thus $\pi$ correctly returns $\true$.
  \end{itemize}

  We are left with the case when no successful operation has its linearization point in the interval of $\pi$.
  By induction, it is enough to prove that the last preceding successful operation 
  with $v$ in $\tilde{S}^{k + 1}$ is $\lit{insert}(v)$.
  Suppose the opposite: the last preceding successful operation with an argument $v$ 
  is $\lit{delete}(v)$ or there is no preceding successful operation with $v$.
  If there is no successful operation then $\pi$ could not find a node with value $v$, otherwise,
  some operation has inserted a node before and its linearization point would have come earlier.
  This contradicts the fact that $\pi$ is successful.

  It remains to check if there exists a preceding $\pi' = \lit{delete}(v)$ operation. 
  Since $l_{\pi'}$ does not lie inside the interval of $\pi$ then $\pi$ has to find either
  the routing node with value $v$ or do not find such node, since $\pi'$ has unlinked it.
  This contradicts the fact that $\pi$ returns $\true$.

  \item $\pi$ returns $\false$.

  When we set the linearization point of $\pi$ relying on a successful update operation in the interval of $\pi$,
  then the linearization point is correct: 
  \begin{itemize}
    \item if we linearize right after successful $\lit{delete}(v)$, then $\pi$ correctly returns $\false$;
    \item if we linearize right before successful $\pi' = \lit{insert}(v)$ then, 
    by the proof of linearizability on successful operations
    either there exists a preceding $\pi'$ successful $\lit{delete}(v)$ or 
    there exists no operation with an argument $v$ in $\tilde{S}$ before $\pi'$. 
    Thus $\pi$ correctly returns $\false$.
  \end{itemize}
  We are left with the case when no successful operation has its linearization point in the interval of $\pi$.
  By induction, it is enough to prove that there is no preceding successful operation with an argument $v$ 
  or the last preceding successful operation with an argument $v$ in $\tilde{S}^{k + 1}$ is $\lit{delete}(v)$.
  Again, suppose the opposite: the last preceding successful operation 
  with an argument $v$ is $\pi' = \lit{insert}(v)$.

  By Lemma\ref{lem:traverse:exact} $\pi$ finds the data node $u$ with value $v$. This contradicts
  the fact and $\pi$ should return $\false$.
\end{enumerate}                                                      
\end{proof}

\subsection{Deadlock-freedom}
\begin{theorem}[Deadlock-freedom]
The algorithm is deadlock-free: assuming that no thread fails in the
middle of its update operation, at least one live thread makes
progress by completing infinitely many operations.   
\end{theorem}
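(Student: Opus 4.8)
The plan is to argue by contradiction: I would assume that from some point on no live thread ever completes an operation, show that this forces the tree to stop changing, and then show that a frozen tree forces some thread to complete after all.

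The groundwork is two observations about \emph{critical sections}, i.e.\ the code a thread runs between taking its first lock and releasing all of its locks. First, a critical section contains no loop and no blocking call: it consists only of a bounded number of conditional acquisitions ($\lit{tryReadLock}$/$\lit{tryWriteLock}$ and their edge- and state-variants), local reads and writes, and ``restart''; each acquisition either succeeds or makes the operation restart, and before restarting the operation releases every lock it holds. Hence a thread that keeps being scheduled leaves any critical section within a bounded number of its own steps, either completing or restarting, and it never stays blocked on a lock; in particular there is no circular wait, so no classical deadlock, and only livelock must be excluded. Second, in every update operation the writes that change the tree — the lines fixed as linearization points in the previous section, together with the $\ms{deleted}$ assignments — form a \emph{suffix} of the critical section: after the first of them there is no further condition check and no further acquisition. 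Therefore, using the hypothesis that no thread fails in the middle of an update, a thread that performs any such write goes on to return; equivalently, the tree structure, the $\ms{state}$ fields and the $\ms{deleted}$ flags are changed only by threads that immediately afterwards complete.

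Now suppose no live thread completes infinitely many operations. Then there is a time $T$ past which no operation completes at all, and since an idle thread starts a new operation, some nonempty set of live threads runs and restarts forever after $T$. By the second observation no tree modification happens after $T$, so past $T$ the reachable tree $B$ (structure, states, deleted flags) is frozen. Take any thread $p$ scheduled infinitely often after $T$. After its next restart $p$ runs a fresh $\lit{traversal}(v)$; since traversal is wait-free and $B$ is frozen, it terminates, and by Lemma~\ref{lem:traverse:exact} it returns the node currently carrying $v$ (or the correct insertion point). Consequently every value $p$ subsequently tests inside its critical section is exactly the frozen value it has just read, so all of $p$'s conditional \emph{checks} pass, and the \emph{only} way $p$ can restart again is that one of its conditional \emph{lock acquisitions} fails because another thread holds a conflicting lock on one of $p$'s (at most four) relevant nodes.

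Excluding this last scenario is the main obstacle. The sketch I would pursue: a thread that holds a conflicting lock is itself inside a critical section and hence, by the first observation, releases it within a bounded number of its own steps; moreover, by the frozen-tree argument, if it also obtained its remaining locks it would pass all its checks and \emph{complete}, contradicting the choice of $T$ — so conflicting locks are held only during bounded windows and always ``on the way to a restart.'' It remains to turn ``each contender is only transiently blocked'' into ``some contender is eventually not blocked.'' Here I would (i) check that, although the acquisitions in the code revisit nodes and are not globally sorted by depth, any two operations whose relevant node-sets intersect request their common locks in one fixed relative order, so there is an effective acquisition order on each conflict class; and (ii) under this order, together with scheduler fairness and the bounded-window property, show that the least contender in that order cannot be blocked indefinitely, hence eventually acquires all its locks, passes all checks, and completes — the desired contradiction. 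Should (i) fail to hold cleanly, the fallback is a direct fairness argument over the finitely many live threads and the fixed, finite set of locks of $B$. Either way we conclude that some live thread completes infinitely many operations.
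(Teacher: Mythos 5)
Your overall architecture is sound and in fact more explicit than the paper's own argument: you correctly reduce the problem to excluding livelock (the trylock-and-restart discipline rules out classical circular wait), and your two observations — that a condition check can only fail because some other thread modified the tree and hence, by the no-failure hypothesis, completed; and that the structural writes form a suffix of each critical section — correspond to the paper's case (2). The gap is at exactly the point you flag as ``the main obstacle.'' The paper closes it with a concrete structural fact plus a one-line extremal argument: locks are acquired \emph{from children to parents} (e.g., in the one-child delete, the edge below $\ms{curr}$ is locked before the edge from $\ms{prev}$ to $\ms{curr}$; in the routing-parent leaf delete, the edges around $\ms{prev}$ are locked before the edge from $\ms{gprev}$), so one takes the \emph{highest conflicting node}: the thread $t_2$ currently holding a lock there has already acquired everything it needs below, and all of its remaining acquisitions are on higher, non-conflicting nodes, so they succeed and $t_2$ completes. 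This is precisely your item (i) plus (ii), but you leave (i) unverified and conditional, so the proof as written does not go through.

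Your proposed fallback — ``a direct fairness argument over the finitely many live threads and the fixed, finite set of locks'' — would not rescue it: with conditional trylocks, fairness alone does not exclude a symmetric livelock in which two threads perpetually each acquire one lock, fail on the other's, release, and retry in lockstep. Some acyclicity or extremality property of the acquisition order is genuinely needed, and it is the substance of the paper's proof. To complete your argument you would need to verify the children-to-parents ordering against Algorithm~\ref{alg:concur} and then run the highest-conflicting-node argument; with that inserted, your frozen-tree setup and Lemma~\ref{lem:traverse:exact} do the rest.
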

\begin{proof}
  A thread executing $\pi=\lit{contains}(v)$ makes progress in a
  finite number of its own
  steps, because $\lit{contains}$ is wait-free.
  Otherwise, take the highest ``conflicting'' node. Note if some thread $t_1$ failed to acquire a lock on this node it happens for two reasons:
  \begin{enumerate}
    \item There is another thread $t_2$ which holds a lock on this node. 
    Since we acquire locks from children to parents and since this is
    the highest conflicting node, $t_2$ successfully acquires locks on higher nodes and
    makes progress.
    \item Some locking conditions are violated: it means that between
      the traversal phase and the attempt to acquire a lock some another thread
    $t_2$ changes expected conditions. Thus, thread $t_2$ has already made progress.
  \end{enumerate}
\end{proof}

 \section{Proof of concurrency optimality}
\begin{theorem}[Optimality]
Our binary search tree implementation is concurrency-optimal with respect to the sequential algorithm
provided in Algorithm~\ref{alg:seq}.
\end{theorem}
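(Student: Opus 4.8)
The plan is to read the theorem off the definition of concurrency optimality: Theorem~\ref{th:corr} already shows that every execution of Algorithm~\ref{alg:concur} induces an observably correct schedule, so it remains to prove the converse --- that the algorithm \emph{accepts} every observably correct schedule $\sigma$, i.e.\ that there is an execution $\alpha$ of Algorithm~\ref{alg:concur} whose sequence of high-level invocations and responses, reads, writes, and node-creation events coincides with $\sigma$ once the steps of restarted attempts and all synchronization steps are erased.

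First I would build $\alpha$ by replaying the events of $\sigma$ from left to right while maintaining the invariant that, at each point, (a) the node fields in $\alpha$ hold the same values as in $\sigma$ and (b) every pending operation has cached in $\alpha$ exactly the values it read in $\sigma$. Reads, node-creation events, and high-level invocations and responses are replayed verbatim: by (a)--(b) they observe or return the same values as in $\sigma$, so the invariant persists. The only delicate case is a write event $e$ of an operation $\pi$. In Algorithm~\ref{alg:concur} every such write is the \emph{unique} schedule-visible step of a critical section whose remaining steps are lock acquisitions, reads and writes of \emph{deleted} marks, and re-checks of the conditions $\pi$ expected from its traversal; I would insert this whole critical section into $\alpha$ as one contiguous block immediately before $e$. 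This placement is always legal precisely because the critical section contains exactly one schedule-visible event, so its invisible steps can be slotted in without disturbing the relative order of the events of $\sigma$.

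It then remains to show that, scheduled in this way, no critical section triggers a restart --- otherwise $\pi$ would re-traverse and $\alpha$ would diverge from $\sigma$. A restart has exactly two causes: (1) a write- or read-lock cannot be taken because a concurrent operation holds a conflicting lock; or (2) a conditional lock's guard is violated (a guarded edge has changed, or a node $\pi$ touches carries a \emph{deleted} mark) or a post-acquisition re-check fails (``$\ms{curr}$ does not have two children'', ``$\ms{curr}$ is not a leaf'', and their analogues). Cause (1) cannot arise, since $\pi$'s critical section is a contiguous block and hence no other operation holds a lock while it runs. For cause (2) I would argue case by case over the six shapes of critical section --- inserting onto a routing node, inserting a fresh leaf, deleting a two-child node, a one-child node, a leaf with a DATA parent, and a leaf with a ROUTING parent --- using only the fact that the field values and \emph{deleted} marks at the instant of $e$ are those prescribed by $\sigma$. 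In each case observable correctness forbids exactly the configuration that would fail a guard or re-check: a guarded edge still holds the value $\pi$ cached, because an intervening write changing it would force $\sigma$ to break the value or routing-degree invariant of the BST, or to lose or resurrect a node, or to make the write-subsequence non-linearizable; and a node $\pi$ writes through carries no \emph{deleted} mark, because that mark is set only just before a physical unlink, and a write of $\pi$ shadowed by such an unlink would again contradict one of the three observable-correctness conditions.

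The main obstacle is this last case analysis, and within it the one-child and leaf cases, where $\pi$ grabs two or three consecutive edge locks ($\ms{curr}$--$\ms{child}$, $\ms{prev}$--$\ms{curr}$, and for a routing parent also $\ms{gprev}$--$\ms{prev}$) together with a state lock: one must deduce ``no intervening structural change along the whole locked path'' purely from observable correctness of $\sigma$, reconciling the state re-checks (DATA versus ROUTING, number of children) with the linearizability of the write-subsequence and the no-resurrection clause --- which is where essentially all of the work lies.
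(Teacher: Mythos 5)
Your proposal is correct and follows essentially the same route as the paper: both arguments hinge on the observation that every critical section contains exactly one schedule-visible event (so critical sections can be scheduled as contiguous blocks and lock contention never causes a restart), and both reduce the theorem to a case analysis showing that each restart-triggering guard can fail only if proceeding with the write would violate one of the three observable-correctness conditions. The paper phrases this as the contrapositive --- each restart condition is \emph{necessary} because ignoring it yields a non-observably-correct schedule, organized around three reusable failure patterns (a write that does not change the represented set, an unlinked subtree losing a value, and a resurrected deleted node) --- which are exactly the failure modes you enumerate.
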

\begin{proof}
Consider all the executions of our algorithm in which all critical sections are executed sequentially.
Since all critical sections in our algorithm contains only one operation from the sequential algorithm, the implementation
accepts all the schedules in which the operation is not restarted by failing some condition in the critical sections.
So, it is enough to show that each condition that forces the restart is crucial, i.e., if the operation
ignores it the schedule will be not observably correct schedule.


For the next discussion we have to define two values $I(T, v)$ and $D(T, v)$~---
the number of insert and delete operations with argument $v$ that made at least one write in the prefix with length $T$ of
schedule $\sigma$, later referred as $\sigma(T)$.
Since we consider the linearization of operations that performed write, $I(T, v)$ and $D(T, v)$ are exactly the number of 
successful insert and delete operations in any completion of $\sigma(T)$.
From hereon, when we talk about the completions we mean only operations that performed write.

To slightly simplify the further proof by exhaustion we look at three common situations (later referred to as Case 1, 2 or 3)
that appear under consideration, and show that they lead to not observably correct schedule:
\begin{enumerate}
\item The modification in the critical section of operation $\pi = \lit{insert}(v)$ (the case of $\lit{delete}(v)$ is considered similarly)
does not change the set of values represented by our tree,
i.e., fields left, right and state for any node reachable from the root does not change or some routing vertex becomes unlinked.
Let this modification be the $T$-th event of the current schedule $\sigma$. Consider two prefixes of this schedule:
$\sigma(T - 1)$ and $\sigma(T)$. There could happen two cases:
\begin{itemize}
\item If the value $v$ is present in the set after $\sigma(T - 1)$,
then $I(T - 1, v) = D(T - 1, v) + 1$, since $\sigma(T - 1)$ is linearizable. 
We know that $\pi$ is successful, then $I(T, v) = D(T, v) + 2$. By that, any completion of $\sigma(T)$ cannot be linearizable, meaning that
$\sigma$ is not observably correct.
\item If the value $v$ is not present in the set after completion of $\sigma(T - 1)$ then
$I(T - 1, v) = D(T - 1, v)$, since $\sigma(T - 1)$ is linearizable.
We know that $\pi$ is successful, then $I(T, v) = D(T, v) + 1$, but the value $v$ is still not present in the set after $\sigma(T)$.
By that, any completion $\sigma(T)$ cannot be linearizable, meaning that $\sigma$ is not observably correct.
\end{itemize}
\item After the modification in the critical section of operation $\pi$ with argument $v$
a whole subtree of node $u$ with a value different from $v$ becomes unreachable from the root.
Let this modification be the $T$-th event of the current schedule $\sigma$.
Because of the structure of the tree, subtree of node $u$ should contain at least one data vertex with value $x$ not equal to $v$.
Since $x$ was reachable after the modification and $\sigma(T - 1)$ is linearizable,
we assume $I(T - 1, x) = D(T - 1, x) + 1$. The number of successful update operations with argument $x$ does not change after 
the modification, so $I(T, x) = D(T, x) + 1$. But the value $x$ is not reachable from the root after $\sigma(T)$, meaning that
any completion of $\sigma(T)$ cannot be linearized. Thus, $\sigma$ is not observably correct.
\item After the modification in the critical section of operation $\pi$ the node $u$ with deleted mark becomes reachable from the root.
Let this modification be the $T$-th event of the current schedule $\sigma$. 
Let the modification that was done in the same critical section as the deleted mark of $u$ was set to be the $\tilde{T}$-th event of $\sigma$.
It could be seen that $u$ is reachable from the root after $\sigma(\tilde{T} - 1)$ and
after $\sigma(T)$, but $u$ is not reachable from the root after
$\sigma(\tilde{T})$. Thus, $\sigma(T)$ does not satisfy the third requirement to observably correct schedule,
meaning that $\sigma$ is not observably correct.
\end{enumerate}

Now, we want to prove that all conditions that precede each modification operation are necessary and their omission
leads to not observably correct schedule. The proof is done by induction on the position of modification operation in the execution.
The base case, when there are no modification operations done, is trivial. Suppose, we show the correctness of our statement
for the first $i - 1$ modifications and want to prove it for the $i$-th.
Let this modification be the $T$-th event of the schedule $\sigma$. We ignore each condition that
precedes the modification one by one in some order
and show that their omission makes $\sigma$ not observably correct:
\begin{itemize}
\item Operation $\pi=\lit{insert}(v)$ restarts in Line \ref{line:ins:curr} of Algorithm~\ref{alg:concur}.
This means, that at least one of the following condition holds:
\begin{itemize}
\item $curr$ is not a routing node (Line \ref{line:ins:curr}). Then the guarded operation does not change the set of values and by Case 1
$\sigma$ is not observably correct.
\item Deleted mark of $curr$ is set (later, we simply say $curr$ is deleted) (Line \ref{line:ins:curr}).
then $curr$ is already unlinked, so the modification in Line \ref{line:ins:lin}
does not change the set of values and by Case 1 $\sigma$ is not observably correct.
\end{itemize}

\item Operation $\pi=\lit{insert}(v)$ restarts in Lines \ref{line:ins-null:prev:null:L}, \ref{line:ins-null:prev:deleted:L}
(\ref{line:ins-null:prev:null:R}, \ref{line:ins-null:prev:null:L}) of Algorithm~\ref{alg:concur}.
This means, that at least one of the following conditions holds:
\begin{itemize}
\item $prev$ is deleted (Line \ref{line:ins-null:prev:null:L} (\ref{line:ins-null:prev:null:R})).
Then $prev$ is already unlinked and is not reachable from the root.
This means, that the modification in Line \ref{line:ins-null:lin:L} (\ref{line:ins-null:lin:R})
links the new vertex to already unlinked vertex $prev$, not changing the set of values, and
by Case 1 $\sigma$ is not observably correct.
\item The corresponding child of $prev$ is not null (Line \ref{line:ins-null:prev:null:L} (\ref{line:ins-null:prev:null:R})).
Then the write in Line \ref{line:ins-null:lin:L} (\ref{line:ins-null:lin:R})
unlinks a whole subtree of the current child and by Case 2 $\sigma$ is not observably correct.
\end{itemize}

\item Operation $\pi=\lit{delete}(v)$ restarts in Lines \ref{line:del-2:curr:data}, \ref{line:del-2:curr:2children} of Algorithm~\ref{alg:concur}.
This means that either $curr$ is not a data node or $curr$ does not have two children.
\begin{itemize}
\item If $curr$ is not a data node or it is deleted (Line \ref{line:del-2:curr:data}),
then the write at Line \ref{line:del-2:lin} does not change the set of values and
by Case 1 $\sigma$ is not observably correct.
\item If $curr$ does not have two children (Line \ref{line:del-2:curr:2children}),
then after the write in Line \ref{line:del-2:lin} the tree has the routing node $curr$
with less than two children. Thus, after $\sigma(T)$ the tree does not satisfy the second requirement to
observably correct schedule, meaning that $\sigma$ is not observably correct.
\end{itemize}

\item Operation $\pi=\lit{delete}(v)$ restarts in Lines \ref{line:del-1:curr:child}-\ref{line:del-1:curr:1child} of Algorithm~\ref{alg:concur}.
This means that at least one of the following conditions holds:
\begin{itemize}
\item $prev$ is deleted (Line \ref{line:del-1:prev:curr}).
Then the write at Line \ref{line:del-1:lin:L} (\ref{line:del-1:lin:R}) does not change the set of values
and by Case 1 $\sigma$ is not observably correct. For later cases, we already assume, that
$prev$ is not deleted.
\item $child$ because $child$ is deleted (Line \ref{line:del-1:curr:child}).
Then after the write at Line \ref{line:del-1:lin:L} (\ref{line:del-1:lin:R}) the deleted node $curr$ becomes reachable from the root, since $prev$ is
not deleted, and by Case 3 $\sigma$ is not observably correct. From hereon, we assume that $child$ is not
deleted.
\item There is no link from $curr$ to $child$ (Line \ref{line:del-1:curr:child}).
Since $child$ is not deleted, this case could happen only if $curr$ is deleted.
We know that $prev$ and $child$ are not deleted, thus $prev$ has $child$ as its child.
By that, the write at Line \ref{line:del-1:lin:L} (\ref{line:del-1:lin:R})
does not change the set of values and by Case 1 $\sigma$ is not observably correct.
From hereon, we assume that $curr$ is not deleted.
\item There is no link from $prev$ to $curr$ (Line \ref{line:del-1:prev:curr}), because $curr$ is deleted was already covered
by the previous case.
\item $curr$ is not a data node (Line \ref{line:del-1:curr:data}). Then the write in
Line \ref{line:del-1:lin:L} (\ref{line:del-1:lin:R}) does not change the set of values
and by Case 1 $\sigma$ is not observably correct.
\item $curr$ does not have exactly one child (Line \ref{line:del-1:curr:1child}). Since none of $curr$ and $child$ are deleted,
the link from $curr$ to $child$ exists in the tree,
the only possible way to violate is that $curr$ has two children.
Thus, the write in Line \ref{line:del-1:lin:L} (\ref{line:del-1:lin:R}) unlinks a whole subtree
of the other child of $curr$ and by Case 2 $\sigma$ is not observably correct.
\end{itemize}

\item Operation $\pi=\lit{delete}(v)$ restarts in Lines \ref{line:del-0:prev:curr}-\ref{line:del-0:curr:leaf}
and \ref{line:del-0:data:prev:L} (\ref{line:del-0:data:prev:R}) of Algorithm~\ref{alg:concur}.
This means that at least one of the following conditions holds:
\begin{itemize}
\item $prev$ is deleted (Line \ref{line:del-0:data:prev:L} (\ref{line:del-0:data:prev:R})).
Then, the write in Line \ref{line:del-0:data:lin:L} (\ref{line:del-0:data:lin:R}) does not change the set of values
and by Case 1 $\sigma$ is not observably correct. From hereon, we assume that $prev$ is not deleted.
\item $prev$ is not a data node (Line \ref{line:del-0:data:prev:L} (\ref{line:del-0:data:prev:R})).
Then after the write in Line \ref{line:del-0:data:lin:L} (\ref{line:del-0:data:lin:R}) the tree contains a routing node with less than two children.
Thus, after $\sigma(T)$ the tree does not satisfy our second requirement to observably
correct schedule, meaning that $\sigma$ is not observably correct.
\item The child $c$ of $prev$ in the direction of $v$ is null (Line \ref{line:del-0:prev:curr}).
Then the write in Line \ref{line:del-0:data:lin:L} (\ref{line:del-0:data:lin:R}) does not change the set of values
and by Case 1 $\sigma$ is not observably correct.
\item The child $c$ of $prev$ in the direction of $v$ has a key different from $v$ (Line \ref{line:del-0:prev:curr}). (Note that $c$ cannot
be deleted since the link from $prev$ to $c$ is in the tree now.) The write in Line
\ref{line:del-0:data:lin:L} (\ref{line:del-0:data:lin:R}) unlinks a whole subtree of $c$
and by Case 2 $\sigma$ is not observably correct.
\item The child $c$ of $prev$ in the direction of $v$ is not a leaf (Line \ref{line:del-0:curr:leaf}).
Then the write in Line \ref{line:del-0:data:lin:L} (\ref{line:del-0:data:lin:R})
removes whole subtree of $c$ with at least one another data node and by Case 2 $\sigma$ is not observably correct.
In last case we assume that $c$ is a leaf.
\item The child $c$ of $prev$ in the direction of $v$ is a routing node (Line \ref{line:del-0:curr}).
Then before the write in Line \ref{line:del-0:data:lin:L} (\ref{line:del-0:data:lin:R})
the tree contains a routing leaf $c$. Thus, after $\sigma(T - 1)$ the tree does not
satisfy our second requirement to observably correct schedule, meaning that $\sigma$ is not observably correct.
\end{itemize}

\item Operation $\pi=\lit{delete}(v)$ restarts in Lines \ref{line:del-0:prev:curr}-\ref{line:del-0:curr:leaf}
and \ref{line:del-0:routing:prev:child:L}-\ref{line:del-0:routing:prev:L}
(\ref{line:del-0:routing:prev:child:R}-\ref{line:del-0:routing:prev:R}) of Algorithm~\ref{alg:concur}.
This means that at least one of the following conditions holds:
\begin{itemize}
\item $gprev$ is deleted (Line \ref{line:del-0:routing:gprev:prev:L} (\ref{line:del-0:routing:gprev:prev:R})).
Then the write in Line \ref{line:del-0:routing:lin:L} (\ref{line:del-0:routing:lin:R}) does not change the set of
values and by Case 1 $\sigma$ is not observably correct. From hereon, we assume that $gprev$ is not deleted.
\item $prev$ is not a child of $gprev$. (Line \ref{line:del-0:routing:gprev:prev:L} (\ref{line:del-0:routing:gprev:prev:R}))
Since $gprev$ is not deleted, this case could happen
only if $prev$ is deleted.
$prev$ could be physically deleted only if it has at most one child, thus
$curr$ or $child$ has to be deleted.
If $curr$ is deleted, then the write in Line \ref{line:del-0:routing:lin:L} (\ref{line:del-0:routing:lin:R})
does not change the set of values and by Case 1 $\sigma$ is not observably
correct. Otherwise, $child$ is deleted, then the write in Line \ref{line:del-0:routing:lin:L} (\ref{line:del-0:routing:lin:R})
links the deleted node back to the tree and by Case 3 $\sigma$
is not observably correct. Later, we assume that $prev$ is not deleted.
\item $prev$ is a data node (Line \ref{line:del-0:routing:prev:L} (\ref{line:del-0:routing:prev:R})).
Then the write in Line \ref{line:del-0:routing:lin:L} (\ref{line:del-0:routing:lin:R}) unlinks $prev$ from the tree and by
the same reasoning as in Case 2 $\sigma$ is not observably correct.
\item $child$ is not a current child of $prev$ (Line \ref{line:del-0:routing:prev:child:L} (\ref{line:del-0:routing:prev:child:R})).
$child$ should be deleted, since $prev$ is not.
Then the write in Line \ref{line:del-0:routing:lin:L} (\ref{line:del-0:routing:lin:R})
links the deleted node back to the tree and by Case 3 $\sigma$ is not observably correct.
\item The last four cases are identical to the last four cases for $\lit{delete}(v)$ that restarts in
Lines \ref{line:del-0:prev:curr}-\ref{line:del-0:curr:leaf} and \ref{line:del-0:routing:prev:L} (\ref{line:del-0:routing:prev:R}).
\end{itemize}                                                                                                                           
\end{itemize}

We showed that restart of operation in the execution happens only if the corresponding sequential schedule is not observably correct.
Thus, our algorithm is indeed concurrency-optimal.
\end{proof}

\end{document}
